\def\rankop{\textsf{rank}}
\def\selop{\textsf{select}}
\def\accessop{\textsf{access}}
\newcommand{\forbMMOne}{3\text{-}14\text{-}2}
\newcommand{\forbMMTwo}{2\text{-}41\text{-}3}
\newcommand{\rminmax}{\textsc{RMinMax}}
\newcommand{\rmin}{\textsc{RMin}}
\newcommand{\rmax}{\textsc{RMax}}
\newcommand{\rtopk}{\textsc{RTopK}}
\newcommand{\opset}{\mathcal{C}}
\newcommand{\Oh}{\mathcal{O}}
\newcommand{\polylog}{\text{polylog}}
\title{Optimal Encodings for Range Top-\texorpdfstring{$k$}{k},
  Selection, and Min-Max}
\author{Pawe{\l} Gawrychowski\inst{1}\thanks{Currently holding a post-doctoral position at Warsaw Center of Mathematics and Computer Science.}  \and Patrick K. Nicholson\inst{2}}
\institute{Institute of Informatics, University of Warsaw, Poland \and Max-Planck-Institut für Informatik, Saarbrücken, Germany}
\begin{document}

\pagestyle{plain}
\maketitle

\begin{abstract}
We consider encoding problems for range queries on arrays. In these
problems the goal is to store a structure capable of recovering the
answer to all queries that occupies the information theoretic minimum
space possible, to within lower order terms.  As input, we are given
an array $A[1..n]$, and a fixed parameter $k \in [1,n]$.  A
\emph{range top-$k$} query on an arbitrary range $[i,j] \subseteq
     [1,n]$ asks us to return the ordered set of indices $\{\ell_1,
     ..., \ell_{k}\}$ such that $A[\ell_m]$ is the $m$-th largest
     element in $A[i..j]$, for $1 \le m \le k$.  A \emph{range
       selection} query for an arbitrary range $[i,j] \subseteq [1,n]$
     and query parameter $k' \in [1,k]$ asks us to return the index of
     the $k'$-th largest element in $A[i..j]$.  We completely resolve
     the space complexity of both of these heavily studied
     problems---to within lower order terms---for all $k = o(n)$.
     Previously, the constant factor in the space complexity was known
     only for $k=1$.  We also resolve the space complexity of another
     problem, that we call \emph{range min-max}, in which the goal is
     to return the indices of both the minimum and maximum elements in
     a range.
\end{abstract}

\section{Introduction}

Many important algorithms make use of range queries over arrays of
values as subroutines~\cite{N13,S13}.  As a prime example, text
indexes that support pattern matching queries often maintain an array
storing the lengths of the longest common prefixes between consecutive
suffixes of the text.  During a search for a pattern this array is
queried in order to find the position of the minimum value in a given
range.  That is, a subroutine is needed that can preprocess an array
$A$ in order to answer \emph{range minimum queries}.  Formally, as
input to such a query we are given a range $[i,j] \subseteq [1,n]$,
and wish to return the index $k = \arg\min_{i \le \ell\le j}A[\ell]$.
In text indexing applications memory is often the constraining factor,
so the question of how many bits are needed to answer range minimum
queries has been heavily studied.  After a long line of research
(see~\cite{BFPSS05,S07}), it has been determined that such queries can
be answered in constant time, by storing a data structure of size $2n
+o(n)$ bits~\cite{FH11}.  Furthermore, this space bound is optimal to
within lower order terms (see~\cite[Sec.~1.1.2]{FH11}).  The
interesting thing is that the space does not depend on the number of
bits required to store individual elements of the array $A$.  After
constructing the data structure we can discard the array $A$, while
still retaining the ability to answer range minimum queries.

Results of this kind, where it is shown that the solutions to all
queries can be stored using less space than is required to store the
original array, fall into the category of \emph{encodings}, and, more
generally, \emph{succinct} data structures~\cite{J89}.  Specifically,
given a set of combinatorial objects $\chi$ we wish to represent an
arbitrary member of $\chi$ using $\lg|\chi| + o(\lg|\chi|)$
bits\footnote{We use $\lg x$ to denote $\log_2 x$.}, while still
supporting queries, if possible.  If queries can be supported by the
representation then we refer to it as a data structure, but if not,
then we refer to it as an encoding.  For the case of range minimum
queries or range maximum queries, the set $\chi$ turns out to be
\emph{Cartesian trees}, which were introduced by Vuillemin~\cite{V80}.
For a given array $A$, the Cartesian tree encodes the solution to all
range minimum queries, and similarly, if two arrays have the same
solutions to all range minimum queries, then their Cartesian trees are
identical~\cite{FH11}.

Recently, there has been a lot of interest the following two problems,
that generalize range maximum queries in two different ways.  The
input to each of the following problems is an array $A[1..n]$, that we
wish to preprocess into an encoding occupying as few bits as possible,
such that the answers to all queries are still recoverable.  We assume
a value $k \ge 1$ is fixed at preprocessing time.
\begin{itemize}

\item \textbf{Range top-$k$:} Given an arbitrary query range $[i,j]
  \subseteq [1,n]$ and $k' \in [1,k]$, return the indices of the $k'$
  largest values in $[i,j]$.  This problem is the natural
  generalization of range maximum queries and has been the focus of a
  several papers, leading to asymptotically optimal lower and upper
  space bounds of $\Omega(n \lg k)$ and $\Oh(n \lg k)$ bits, proved by
  Grossi et al.~\cite{GINRS13} and Navarro, Raman, and
  Rao~\cite{NRS14}, respectively.  The latter upper bound is a data
  structure that can answer range top-$k'$ queries in optimal
  $\Oh(k')$ time.

\item \textbf{Range $k$-selection:} Given an arbitrary query range
  $[i,j] \subseteq [1,n]$ and $k' \leq k$, return the index of the
  $k'$-th largest value in $[i,j]$.  This problem was studied in a
  series of recent papers (see~\cite{GPT09} and \cite{BGJS10} for
  further references), culminating in data structures that occupy a
  linear number of words, and can answer queries in $\Oh(\lg k' / \lg
  \lg n + 1)$ time~\cite{CW13}.  This query time matches a cell-probe
  lower bound for near-linear space data structures~\cite{JL11}. It is
  straightforward to see that any encoding of range top-$k$ queries is
  also an encoding for range $k$-selection queries, though the
  question of how much time is required during a query remains
  unclear~\cite{NRS14}.  Very recently, Navarro, Raman, and
  Rao~\cite{NRS14} described a data structure that can be used to
  answer range $k$-selection queries in optimal $\Oh(\lg k' / \lg \lg n
  + 1)$ time~\cite{NRS14}, and, like the range top-$k$ data structure,
  occupies $\Oh(n \lg k)$ bits of space.
\end{itemize}

\subsubsection{Our Results}

We present the first space-optimal encodings to range top-$k$---and
therefore range selection also---as well as a new problem that we call
\emph{range min-max}, in which the goal is to return the indices of
both the minimum and maximum element in the array.  We emphasize that,
on their own, the encodings for range top-$k$ and selection do not
support queries efficiently: they merely store the solutions to all
queries in a compressed form.  However, our encoding for range min-max
can be augmented with $o(n)$ additional bits of data to create a data
structure that supports queries in $\Oh(1)$ time.  Furthermore, even
without query support, our encodings for range top-$k$ and selection
address a problem posed in the papers of Grossi et al.~ \cite{GINRS13}
and Navarro et al.~\cite{NRS14}.

In Table~\ref{tab:results} we present a summary of previous and new
results.  Prior to this work, the only value for which the exact
coefficient of $n$ was known was the case in which $k=1$ (i.e., range
maximum queries).  For even $k=2$ the best previous estimate was that
the coefficient of $n$ is between $2.656$ and
$3.272$~\cite{PNRR14}. The lower bound of $2.656$ was derived using
generating functions and an extensive computational
search~\cite{PNRR14}.  In contrast, our method is purely combinatorial
and gives the exact coefficient for all $k = o(n)$.  For $k=2,3,4$ the
coefficients are (rounding up) $2.755$, $3.245$, and $3.610$,
respectively.

\begin{table}[!t]
\centering
\caption{\label{tab:results} Old and new results. Both upper and lower
  bounds are expressed in bits.  Our bounds make use of the binary
  entropy function $H(x) = x \lg(\frac{1}{x}) + (1-x)
  \lg(\frac{1}{1-x})$.  For the entry marked with a $\dagger$ the
  claimed bound holds when $k = o(n)$.}
\begin{tabular}{|l|l|l|l|l|}
\hline
Ref. & Query & Lower Bound & Upper Bound & Query Time \\
\hline
\cite{FH11}   & max     & $2n - \Theta(\lg n)$ & $2n + o(n)$ & $\Oh(1)$ \\ 
\cite{GINRS13,NRS14}     & top-$k$ & $\Omega(n \lg k)$ & $\Oh(n \lg k)$ & $\Oh(k')$ \\ 
\cite{PNRR14}     & top-$2$ & $2.656n - \Theta(\lg n)$ & $3.272n + o(n)$ & $\Oh(1)$ \\
\hline
Thm.~\ref{thm:min-max-ub},~\ref{thm:min-max-lb}     & min-max & $3n - \Theta(\lg(n))$ & $3n + o(n)$ & $\Oh(1)$ \\ 
Thm.~\ref{thm:top-k-ub},~\ref{thm:top-k-lb}     & top-$2$ & $3nH(\frac{1}{3}) - \Theta(\polylog(n))$ & $3nH(\frac{1}{3}) + o(n)$  & --- \\ 
Thm.~\ref{thm:top-k-ub},~\ref{thm:top-k-lb}     & top-$k$ & $(k+1)nH(\frac{1}{k+1})(1-o(1))\dagger$  & $(k+1)nH(\frac{1}{k+1}) + o(n)$ & --- \\
\hline
\end{tabular}
\end{table}

As mentioned above, a negative aspect of our encodings is that they
appear to be somewhat difficult to use as the basis for a data
structure.  However, in Section~\ref{sec:datastructure}, we present a
data structure based on our encoding that \emph{nearly} matches the
optimal space bound.  Explicitly, we can achieve a space bound of
$(k+1.5)nH(\frac{1.5}{k+1.5}) + o(n\lg k)$ bits with query time
$\Oh(\text{poly}(k\lg n))$.  Thus, our data structure achieves space
much closer to the optimal bound than the previous best
result~\cite{NRS14}, but the query time is worse.  We leave the
following data structure problem open: how can range top-$k$ and
selection queries be supported with optimal query time using space
matching our encodings (to within lower order terms)?

Finally, we wish to point out that although our formulation of the
range top-$k$ problem returns the indices in sorted order, the
constant factor in our lower bound also holds for the \emph{unsorted}
version, in which we return the indices in an arbitrary order,
provided $k=o(n)$.  This follows since any encoding strategy for
unsorted range top-$k$ can be used to construct a sorted top-$k$
encoding, by padding the end of the input array with $k-1$ values
larger than any other.  The unsorted encoding of this padded array can
be used to infer the solution to an arbitrary sorted top-$k$ query
$[i,j]$ by examining the solutions to queries $[i,j], [i,j+1], ...,
[i,n+k-1]$: see Appendix~\ref{sec:reduction} for details.


\subsubsection{Discussion of Techniques and Road Map} 

Prior work for top-$k$, for $k \ge 2$, focused on encoding a
decomposition of the array, called a shallow
cutting~\cite{GINRS13,NRS14}.  Since shallow cuttings are a general
technique used to solve many other range searching
problems~\cite{M92,JL11}, these previous works~\cite{GINRS13,NRS14}
required additional information beyond storing the shallow cutting in
order to recover the answers to top-$k$ queries.  Furthermore, in
these works the exact constant factor is not disclosed, though we
estimate it to be at least twice as large as the bounds we
present. For the specific case of range top-$2$ queries a different
encoding has been proposed based on \emph{extended Cartesian
  trees}~\cite{PNRR14}.  In contrast to both of the previous
approaches, our encoding is based the approach of Fischer and
Heun~\cite{FH11}, who describe what is called a 2D min-heap
(resp. max-heap) in order to encode range minimum queries (resp. range
maximum queries).  We begin in Section~\ref{sec:min-max} by showing
how to generalize their technique to simultaneously answer both range
minimum and range maximum queries.  Our encoding provides the answer
to both using $3n +o(n)$ bits in total, compared to $4n +o(n)$ bits
using the trivial approach of constructing both encodings separately.
We then show this bound is optimal by proving that any encoding for
range min-max queries can be used to distinguish a certain class of
permutations.  We move on in Section~\ref{sec:top-k} to generalize
Fischer and Heun's technique in a clean and natural way to larger
values of $k$.  Indeed, the encoding we present---like that of Fischer
and Heun---is simple enough to implement.  The main difficulty is
proving that the bound achieved by our technique is optimal.  For this
we enumerate a particular class of walks, via an application of the
so-called cycle lemma of Dvoretzky and Motzkin~\cite{DM47}.

Finally, in Section~\ref{sec:datastructure} we show that our encoding
can be used as the basis for a range top-$k$ data structure.  Though
the resultant space bound and query time are suboptimal, we note that
interesting challenges had to be overcome to design a data structure
based on our encoding.  Concisely, we required the ability to
decompose the encoding into smaller blocks in order to support queries
efficiently.  To do this we, in some sense, generalized the pioneers
approach of Jacobson~\cite{J89} via a non-trivial decomposition
theorem.  Since balanced parentheses representations appear in many
succinct data structures, we believe this will likely be of
independent interest.

\section{Optimal Encodings of Range Min-Max Queries\label{sec:min-max}}

In this section we describe our encoding for range min-max queries.
We use $\rminmax(A[i..j])$ to denote a range min-max query on a
subarray $A[i..j]$.  The solution to the query is the ordered set of
indices $\{\ell_1, \ell_2\}$ such that $\ell_1 = \arg\max_{\ell \in
  [i,j]}A[\ell]$ and $\ell_2 = \arg\min_{\ell \in [i,j]}A[\ell]$.

\subsection{Review of Fischer and Heun's Technique}

We review the algorithm of Fischer and Heun~\cite{FH11} for
constructing the encoding of range minimum (resp. maximum) queries.

\begin{figure}
\centering
\includegraphics[width=\textwidth]{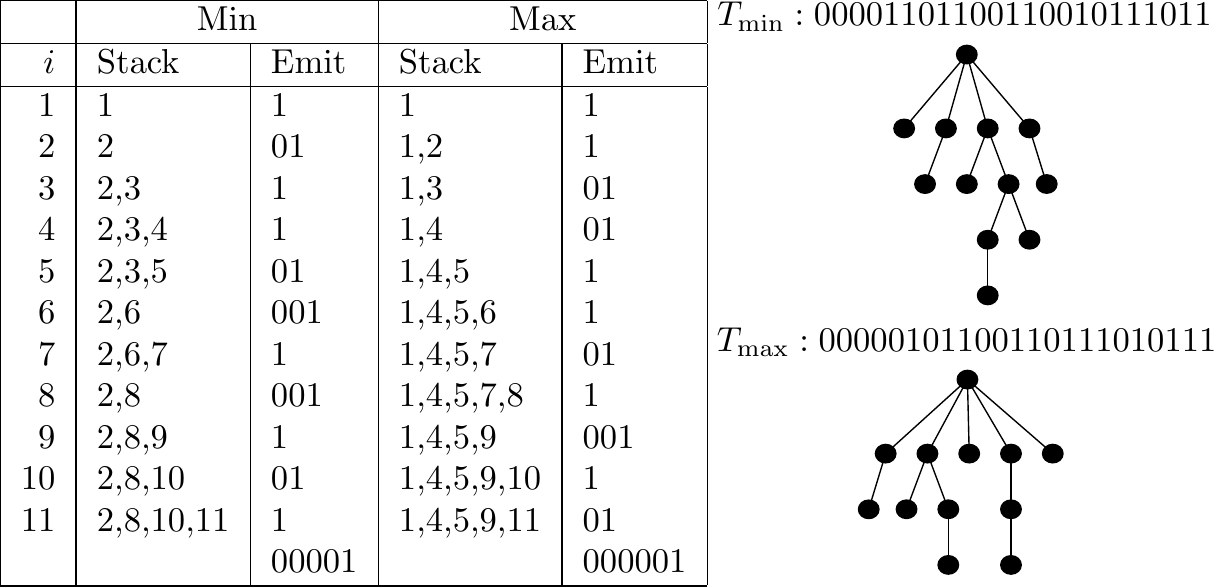}
\caption{\label{fig:perm}A trace of Fischer and Huen's algorithm for
  constructing the encoding for range minimum and maximum queries on
  an array $A[1..11]=(11,1,7,10,9,3,4,2,8,5,6)$.}
\end{figure}

Consider an array $A[1..n]$ storing $n$ numbers. Without loss of
generality we can alter the values of the numbers so that they are a
permutation, breaking ties in favour of the leftmost element. To
construct the encoding for range minimum queries we sweep the array
from left to right\footnote{In the original paper the sweeping process
  moves from right to left, but either direction yields a correct
  algorithm by symmetry.}, while maintaining a stack. A string of bits
$T_{\min}$ (resp. $T_{\max}$) will be emitted in reverse order as we
scan the array.  Whenever we push an element onto the stack, we emit a
one bit, and whenever we pop we emit a zero bit. Initially the stack
is empty, so we push the position of the first element we encounter on
the stack, in this case, $1$.  Each time we increment the current
position, $i$, we compare the value of $A[i]$ to that of the element
in the position $t$, that is stored on the top of the stack.  While
$A[t]$ is not less than (resp. not greater than) $A[i]$, we pop the
stack.  Once $A[t]$ is less than (resp. greater than) the current
element or the stack becomes empty, we push $i$ onto the stack.  When
we reach the end of the array, we pop all the elements on the stack,
emitting a zero bit for each element popped, followed by a one bit.
An example illustrating a trace of the algorithm described here can be
found in Figure~\ref{fig:perm}.

Fischer and Heun showed that the string of bits output by this process
can be used to encode a rooted ordinal tree in terms of its
\emph{depth first unary degree sequence} or DFUDS~\cite{FH11}.  To
extract the tree from a sequence, suppose we read $d$ zero bits until
we hit the first one bit.  Based on this, we create a node $v$ of
degree $d$, and continue building first child of $v$ recursively.
Since there are at most $2n$ stack operations, the tree is therefore
represented using $2n$ bits.  We omit the technical details of how a
query is answered, but the basic idea is to augment this tree
representation with succinct data structures supporting navigation
operations.  The following corollary summarizes part of their result:

\begin{lemma}[Corollary~5.6~\cite{FH11}]\label{lem:aux-index}
Given the DFUDS representation of $T_{\min}$ (resp. $T_{\max}$) any
query $\rmin(A[i..j])$ (resp. $\rmax(A[i..j])$) can be
answered in constant time using an index occupying $\Oh(\frac{n \log
  \log n}{ \log n}) = o(n)$ additional bits of space.
\end{lemma}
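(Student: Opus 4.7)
The plan is to reinterpret the $2n$-bit string output by the Fischer--Heun scan as the DFUDS of an ordinal tree $T_{\min}$ on $n$ nodes, in bijection with the positions of $A$, and then to reduce $\rmin(A[i..j])$ to a constant number of navigation operations on $T_{\min}$.

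First, I would unwind the stack-based scan to identify the tree it encodes. Reading the emitted bits in reverse, each maximal run of zeros followed by a one describes the degree of the next node in DFUDS preorder, so the output is indeed a valid DFUDS. I would then prove by induction on the scan that the node of $T_{\min}$ associated with position $i$ has as ancestors exactly those indices $p<i$ that were still on the stack at the moment $i$ was pushed, and that the subtree rooted at $i$ spans a maximal contiguous block of positions on which $A[i]$ is the strict minimum. This is the 2D min-heap property used by Fischer and Heun and is the only structural fact I really need.

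Given the min-heap property, the index answering $\rmin(A[i..j])$ is the unique position $m \in [i,j]$ whose subtree in $T_{\min}$ covers $[i,j]$, and $m$ can be read off from the lowest common ancestor of the DFUDS nodes for $i$ and $j$ (with a small adjustment when one endpoint lies on the ancestor path of the other). Converting a position in $[1,n]$ to the corresponding node in DFUDS is a single $\rankop$/$\selop$ call on the bit string, so the whole query collapses to $\Oh(1)$ uses of $\rankop$, $\selop$, and $\mathrm{LCA}$ on $T_{\min}$.

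To realize the claimed space and time I would invoke the standard succinct-tree machinery built on top of DFUDS: using $\Oh(n\log\log n/\log n)$ extra bits one can support $\rankop$ and $\selop$ on a bit vector in $\Oh(1)$ time, and hence constant-time DFUDS $\mathrm{parent}$, depth, and $\mathrm{LCA}$ operations. Composing the constant-depth reduction above with these primitives yields $\Oh(1)$-time $\rmin$ queries within the stated space. The main obstacle, which I would tackle first, is tying the stack-based construction cleanly to the min-heap/LCA characterization so that the identity between $\rmin$ and $\mathrm{LCA}$ is transparent; once this dictionary is in place, the rest is a direct appeal to known succinct-tree results, and the symmetric argument for $T_{\max}$ and $\rmax$ follows by replacing ``$\ge$'' with ``$\le$'' throughout the scan.
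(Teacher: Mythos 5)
This lemma is not proved in the paper at all: it is imported verbatim as Corollary~5.6 of Fischer and Heun (reference~[FH11]), so there is no in-paper argument to compare against. Your outline does reconstruct the broad shape of Fischer and Heun's own proof---interpret the emitted bits as the DFUDS of the 2D min-heap, reduce $\rmin$ to tree navigation, and realize the navigation primitives with $\Oh(n\log\log n / \log n)$-bit rank/select machinery. That part is sound and matches what the cited result actually does.

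There is, however, a genuine error in the intermediate characterization you rely on. You claim that the answer $m$ to $\rmin(A[i..j])$ is ``the unique position $m\in[i,j]$ whose subtree in $T_{\min}$ covers $[i,j]$.'' In the 2D min-heap the subtree of a node $m$ consists only of positions $\ge m$ (namely $[m,r]$ where $r+1$ is the next smaller value to the right), so it can never cover $[i,j]$ when $m>i$. Concretely, for $A=(3,1,2)$ the tree has an artificial root with children $1$ and $2$, and $2$ has child $3$; here $\rmin(A[1..2])=2$, yet the subtree of node $2$ is $\{2,3\}$, which does not cover $[1,2]$. The characterization you are reaching for is the Cartesian-tree one, not the 2D-min-heap one. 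The correct statement for the min-heap is: if $\mathrm{LCA}(i,j)=i$ then $m=i$; otherwise $m$ is the child of $\mathrm{LCA}(i,j)$ on the path toward $j$ (equivalently, the farthest ancestor of $j$ whose index is still $\ge i$). That does give a constant number of navigation steps, so the rest of your argument can be salvaged once this is fixed. I would also note that Fischer and Heun do not actually compute an explicit $\mathrm{LCA}$: they answer the query via a $\pm1$ range-minimum query on the excess sequence of the DFUDS string, which avoids general LCA machinery; both routes fit in the stated space, but theirs is the one the lemma's constant factors are keyed to.
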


\subsection{Upper Bound for Range Min-Max Queries}

We propose the following encoding for a simultaneous representation of
$T_{\min}$ and $T_{\max}$.  Scan the array from left to right and
maintain two stacks: a min-stack for range minimum queries, and a
max-stack for range maximum queries.  Notice that in each step except
for the first and last, we are popping an element from exactly one of
the two stacks.  This crucial observation allows us to save space.  We
describe our encoding in terms of the min-stack and the max-stack
maintained as above.  Unlike before however, we maintain two separate
bit strings, $T$ and $U$. If the new element causes $\delta \ge 1$
elements on the min-stack to be popped, then we prepend
$0^{\delta-1}1$ to the string $T$, and prepend $0$ to the string $U$.
Otherwise, if the new element causes $\delta$ elements on the
max-stack to be popped, we prepend $0^{\delta-1}1$ to the string $T$,
and $1$ to the string $U$.  Since exactly $2n$ elements are popped
during $n$ push operations, the bit string $T$ has length $2n$, and
the bit string $U$ has length $n$, for a total of $3n$ bits.

Before stating our theorem, we require the following result by Raman,
Raman, and Rao~\cite{RRR07}:

\begin{lemma}[\cite{RRR07}]\label{lem:rrr} Let $\mathcal{V}$ be a
  bit vector of length $n$ bits, containing $m$ one bits.  In the
  word-RAM model with word size $\Theta(\lg n)$ bits, there is a data
  structure of size $\lg \binom{n}{m} + \Oh(\frac{n\lg \lg n}{\lg n})
  \le n H(\frac{m}{n}) + \Oh(\frac{n \lg \lg n}{\lg n})$ bits that
  supports the following operations in $\Oh(1)$ time, for any $i \in
  [1,n]$:
\begin{enumerate}
\item $\accessop(\mathcal{V}, i)$: return the bit at index $i$ in $\mathcal{V}$.
\item $\rankop_\alpha(\mathcal{V}, i)$: return the number of bits with
  value $\alpha \in \{0,1\}$ in $\mathcal{V}[1..i]$.
\item $\selop_\alpha(\mathcal{V}, i)$: return the index of the $i$-th
  bit with value $\alpha \in \{0,1\}$.
\end{enumerate}
\end{lemma}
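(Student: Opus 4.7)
The plan is to implement a two-level succinct representation with precomputed universal lookup tables, in the spirit of the Four-Russians technique. First I would partition $\mathcal{V}$ into \emph{blocks} of $b = \tfrac{1}{2}\lg n$ bits and group blocks into \emph{superblocks} of $s = \lg^{2} n$ bits. For each block $B_i$ I would store two pieces: a \emph{class} $c_i \in \{0, \dots, b\}$ equal to the popcount of $B_i$, stored at $\lceil \lg(b+1) \rceil = \Oh(\lg \lg n)$ bits per entry (total $\Oh(n \lg \lg n / \lg n)$ bits); and an \emph{offset} $o_i \in \{0, \dots, \binom{b}{c_i} - 1\}$ giving the rank of $B_i$ in a canonical enumeration of length-$b$ strings of weight $c_i$. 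The offsets are concatenated into a single variable-width bit stream.

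The crucial step is bounding the offset stream by $\lg \binom{n}{m} + \Oh(n/\lg n)$ bits. Since the actual class sequence $(c_1,\dots,c_{n/b})$ contributes a single term of the Vandermonde identity $\binom{n}{m} = \sum_{c_1 + \cdots + c_{n/b} = m}\prod_i \binom{b}{c_i}$, we have $\prod_i \binom{b}{c_i} \le \binom{n}{m}$, hence
\[
\sum_i \bigl\lceil \lg \tbinom{b}{c_i} \bigr\rceil \;\le\; \lg \tbinom{n}{m} + n/b \;=\; \lg \tbinom{n}{m} + \Oh(n/\lg n).
\]
At each superblock boundary I would store the cumulative rank and an absolute pointer into the offset stream, at $\Oh(\lg n)$ bits per superblock and thus $\Oh(n/\lg n)$ bits overall; within a superblock, per-block relative rank counts and relative offset pointers use $\Oh(\lg \lg n)$ bits each, again $\Oh(n \lg \lg n / \lg n)$ bits in total. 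A single universal lookup table keyed by (class, offset, intra-block index), of size $2^{b} \cdot \polylog(n) = o(n)$ bits, resolves $\accessop$ and intra-block $\rankop$/$\selop$ in constant time.

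For $\selop$, an additional two-level sampling that records the superblock containing every $(\lg^{2} n)$-th one-bit (and symmetrically zero-bit) pinpoints the target superblock, after which a scan of the $\Oh(s/b) = \Oh(\lg n / \lg \lg n)$ block headers inside that superblock (itself answered in $\Oh(1)$ via a second lookup table over packed headers) locates the target block; the intra-block step is then a final lookup. The main obstacle is the simultaneous accounting: one must verify that the rounding losses across blocks, the superblock and in-superblock headers, the hierarchical select samples, and the universal tables all fit within the claimed $\Oh(n \lg \lg n / \lg n)$ redundancy, while every query step remains a constant-depth chain of $\Oh(1)$-time table lookups. Calibrating the block size exactly at $\tfrac{1}{2}\lg n$ is what makes the lookup tables sublinear and simultaneously keeps the per-block rounding loss bounded by $\Oh(n/\lg n)$.
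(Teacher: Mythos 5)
This statement is cited verbatim from Raman, Raman, and Rao~\cite{RRR07}; the paper does not reprove it, so there is no ``paper proof'' to compare against. Your reconstruction does capture the essential ingredients of the RRR structure: the block/superblock decomposition, the class/offset encoding with the Vandermonde-product accounting $\prod_i \binom{b}{c_i}\leq\binom{n}{m}$, the $\Oh(n/\lg n)$ rounding loss from ceiling the per-block offset lengths, and the universal lookup tables of size $2^{b}\cdot\polylog(n)=o(n)$.

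The one genuine gap is in the $\selop$ structure. Sampling the superblock containing every $(\lg^{2}n)$-th one-bit tells you where a nearby \emph{sampled} one-bit lives, not where the \emph{target} one-bit lives. When the bitvector is sparse, two consecutive sampled one-bits can lie many superblocks apart, so ``a scan of the $\Oh(s/b)$ block headers inside that superblock'' is not enough --- the target may be an unbounded number of superblocks away, and a naive search would cost $\Theta(\lg n)$ time via binary search over cumulative rank counters rather than $\Oh(1)$. The standard remedy (Clark's select directory, which RRR also reuses) introduces a case split between \emph{long} gaps between sampled one-bits (positions of all intermediate one-bits are stored explicitly, affordable because long gaps are rare) and \emph{short} gaps (a second-level sampling plus a lookup table suffices). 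This case analysis is exactly what simultaneously bounds query time to $\Oh(1)$ and keeps the extra space at $\Oh(n\lg\lg n/\lg n)$, and it is missing from your sketch. Everything else --- the choice $b=\tfrac{1}{2}\lg n$, the class array at $\Oh(\lg\lg n)$ bits per block, the two-level rank directory --- is the right plan.
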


Next, we show that by using our encoding and Lemma~\ref{lem:rrr} it
is possible to also support queries on this encoding in $\Oh(1)$ time.

\begin{theorem}\label{thm:min-max-ub}
There is a data structure that occupies $3n + o(n)$ bits of space,
such that any query $\rminmax(A[i..j])$ can be answered in $\Oh(1)$
time.
\end{theorem}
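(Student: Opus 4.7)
The plan is to augment the $3n$-bit encoding $\langle T, U\rangle$ with the $\rankop$/$\selop$/$\accessop$ structures of Lemma~\ref{lem:rrr}, simulate constant-time access to the two DFUDS bit strings $T_{\min}$ and $T_{\max}$ without ever materializing them, and then invoke Lemma~\ref{lem:aux-index} on each so that a $\rminmax(A[i..j])$ query reduces to one $\rmin(A[i..j])$ plus one $\rmax(A[i..j])$ query.

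I would first verify that $T$ and $U$ together losslessly determine $T_{\min}$ and $T_{\max}$. The key structural observation, already made in the construction, is that at every step $i \geq 2$ exactly one stack pops (some $\delta_i \geq 1$ times) while the other is untouched before both are pushed to. Hence the per-step contributions to the two DFUDS sequences are determined by the block $0^{\delta_i-1}1$ stored in $T$ together with the bit $U[i]$ identifying the active stack: the active stack contributes $0^{\delta_i}1$ and the inactive one contributes the single push bit $1$. In this sense $T_{\min}$ and $T_{\max}$ can, in principle, be recovered bit by bit from $T$ and $U$.

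Next, I would install the structures of Lemma~\ref{lem:rrr} on both $T$ and $U$. Since $|T|=2n$ and $|U|=n$, this occupies $2n+n+o(n)=3n+o(n)$ bits. To access the $p$-th bit of $T_{\min}$, I would use $\selop_1$ and $\rankop_1$ on $T$ to locate the step $i$ this bit belongs to, inspect $U[i]$, and distinguish two cases: if $U[i]=0$ the bit is read directly from the $T$-chunk for step $i$ (using the convention that this chunk is prefixed by one implicit extra $0$ when viewed inside $T_{\min}$), whereas if $U[i]=1$ the requested bit must be the single implicit push bit $1$ contributed by that step. The same recipe with the roles of $0$ and $1$ in $U$ swapped gives constant-time access to $T_{\max}$.

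With constant-time virtual access to both $T_{\min}$ and $T_{\max}$ in hand, I would apply Lemma~\ref{lem:aux-index} to each, which adds $o(n)$-bit auxiliary indexes and enables $O(1)$-time $\rmin$ and $\rmax$ queries; neither DFUDS string is stored explicitly. The step requiring the most care is the position translation in the previous paragraph: in particular, the boundary conventions for processing the first element of $A$ and the end-of-array pops must be fixed so that the cumulative pop counts line up consistently across $T$, $U$, $T_{\min}$, and $T_{\max}$. Once these conventions are pinned down, the rank/select bookkeeping is routine and yields the claimed $3n + o(n)$ bits and $O(1)$ query time.
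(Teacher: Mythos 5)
The proposal breaks down at the position-translation step, which is precisely the nontrivial part of this theorem.

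You write that to access the $p$-th bit of $T_{\min}$ you would ``use $\selop_1$ and $\rankop_1$ on $T$ to locate the step $i$ this bit belongs to.'' But the map from a position in $T_{\min}$ to a step (equivalently, to a position in $T$) is \emph{not} a rank/select operation on $T$ or $U$. Step $j$ contributes $\delta_j+1$ bits to $T_{\min}$ if $U[j]=0$ and exactly $1$ bit if $U[j]=1$, so the starting position of step $i$'s contribution is $1 + (i-1) + \sum_{j<i,\,U[j]=0}\delta_j$. The quantity $\sum_{j<i,\,U[j]=0}\delta_j$ is a prefix sum of the $\delta$-values restricted to the steps where $U$ is $0$; the $\delta_j$ are recorded only as run lengths in $T$ and the subset is determined by $U$, so no single rank or select on $T$ (nor on $U$) computes it. Concretely, if $T=11001$, $U=010$ (so $\delta=(1,1,3)$), then $T_{\min}=0110001$, and locating which step owns $T_{\min}[5]$ requires knowing the cumulative sums $2,3,7$ --- none of which is obtainable from $\rankop_1(T,\cdot)$ or $\selop_1(T,\cdot)$ alone.

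The paper attacks exactly this obstacle: it stores auxiliary marker bit vectors $B_{\min}$, $B_{\max}$ that record, for every $\lg n$-th bit of $T_{\min}$ (resp.\ $T_{\max}$), the corresponding position in $T$, and then distinguishes ``good'' blocks (where the mapped region in $T$ is short and can be decoded via a lookup table) from ``bad'' blocks (where it is long, in which case the decoded $\lg n$ bits are stored explicitly). The final accounting needs a separate compression argument --- that $T$ is sufficiently sparse in the bad regions that the $\lg\binom{n}{m}$ space bound of Lemma~\ref{lem:rrr} absorbs the explicit storage --- to land back at $3n+o(n)$ bits. Your proposal omits all of this machinery, so the ``routine rank/select bookkeeping'' you invoke would in fact require an extra translation structure whose space you have not accounted for, and the claimed $O(1)$-time virtual access to $T_{\min}$ and $T_{\max}$ does not follow from Lemma~\ref{lem:rrr} applied to $T$ and $U$ alone.
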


\begin{proof}
By Corollary~\ref{lem:aux-index}, to prove the theorem, it is
sufficient to show that there is a data structure that occupies $3n +
o(n)$ bits of space, and can recover any block of $\lg n$ consecutive
bits from both $T_{\min}$ and $T_{\max}$ in $\Oh(1)$ time.

If we have such a structure that can extract any
block from either DFUDS representation, then we can use it as an
oracle to access the DFUDS representation of either tree.  Thus, we
need only apply Lemma~\ref{lem:aux-index} to complete the theorem.
The data structure makes use of the bit strings $T$ and $U$, as well
as the following auxiliary data structures:

\begin{enumerate}

\item We precompute a lookup table $\mathcal{L}$ of size
  $\Theta(\sqrt{n} \lg n)$ bits.  The lookup table takes two bit
  strings as input, $s_1$ and $s_2$, both with length $\frac{\lg
    n}{4}$, as well as a single bit $b$.  We conceptually think of the
  bit string $s_1$ as having the format
  $0^{\gamma_1}10^{\gamma_2}1...0^{\gamma_{t-1}}10^{\gamma_{t}}1$,
  where each $\gamma_i \ge 0$. The table returns a new bit string
  $s_3$, of length no greater than $\frac{\lg n}{4}$, that we will
  define next.  Let $\cdot$ be the concatenation operator, and define
  the function:
  $$f(x,y,y') = \begin{cases}0\cdot x & \text{if $y = y'$} \\ 1 &
    \text{otherwise.} \end{cases}$$ If $u_i = 0^{\gamma_i}1$ then $s_3
  = f(u_1,s_2[1],b)\cdot f(u_2,s_2[2],b) \cdots f(u_k,s_2[k],b)$, and
  $s_2[i]$ denotes the $i$-th bit of $s_2$.  Such a table occupies no
  more than the claimed amount of space, and can return $s_3$ (as well
  as $k$) in $\Oh(1)$ time.

\item Each bit in $T$ corresponds to at least one bit in $T_{\min}$ or
  $T_{\max}$.  Also recall that at each step during preprocessing we
  append the value $\delta-1$ in unary to $T$ rather than $\delta$ (as
  in the representation of Fischer and Heun).  Thus, we can treat each
  push operation (with the exception of the first and last)
  corresponding to a single one bit in $T$ as representing three bits:
  two bits in $T_{\min}$ and one bit in $T_{\max}$ or two bits in
  $T_{\max}$ and one bit in $T_{\min}$, depending on the corresponding
  value in $U$.  We store a bit vector $B_{\min}$ of length $2n$ which
  marks the position in $T$ of the bit corresponding to the $(i \lg n
  + 1)$-th bit of $T_{\min}$, for $0 \le i \le \lfloor \frac{2n}{\lg
    n}\rfloor$.  We do the analogous procedure for $T_{\max}$ and call
  the resulting bit vector $B_{\max}$.

\end{enumerate}

Suppose now that we support the operations rank and select on
$B_{\min}$, $B_{\max}$, and $T$.  We use the data structure of
Lemma~\ref{lem:rrr} that for $B_{\min}$ and $B_{\max}$ will occupy

$$O\left(\lg \binom{n}{\frac{n}{ \lg n}} + \frac{n \lg \lg n}{\lg n}\right) = O\left(\frac{n \lg \lg n}{\lg n}\right)$$ 

\noindent
bits, and for $T$ will occupy no more than $2n + \Oh(\frac{n\lg\lg n
}{\lg n})$ bits.  Thus, our data structures at this point occupy
$3n+o(n)$ bits in total, counting the space for $U$.  We will describe
how to recover $\lg n$ consecutive bits of $T_{\min}$; the procedure
for $T_{\max}$ is analogous.  Consider the distances between two
consecutive $1$ bits having indices $x_i$ and $x_{i+1}$ in $B_{\min}$.
Suppose $x_{i+1} - x_{i} \le c \lg n$ in $B_{\min}$, for some constant
$c \ge 9$.  In this case we call the corresponding block $\beta_i$ of
$\lg n$ consecutive bits of $B_{\min}$ \emph{min-good}, and otherwise
we call $\beta_i$ \emph{min-bad}.  We also define similar notions for
\emph{max-good} and \emph{max-bad} blocks.  The problem now becomes
recovering any block (good or bad), since if the $\lg n$ consecutive
bits we wish to extract are not on block boundaries we can simply
extract two consecutive blocks which overlap the desired range, then
recover the bits in the range using bit shifting and bitwise
arithmetic.

If $\beta_i$ is min-good, then we can recover it in $\Oh(c) = \Oh(1)$
time, since all we need to do is scan the corresponding segment of $T$
between the two $1$s, as well as the segment of $U$ starting at
$\texttt{rank}_1(T,x_i)$.  We process the bits of $T$ and $U$ together
in blocks of $\frac{\lg n}{4}$ each, using the lookup table
$\mathcal{L}$: note that we can advance in $U$ correctly by
determining $t$ by counting the number of $1$ bits in either in $s_1$
or $s_3$.  This can be done using either an additional lookup table of
size $\Theta(\sqrt{n})$ using constant time, or by storing the answer
explicitly in $\mathcal{L}$.  When we do this, there is one border
case which we must handle, which occurs when the last bit in $s_1$ is
not a $1$.  However, we can simply append a $1$ to end of $s_1$ in
this case, and then delete either $1$ or $01$ from the end of $s_3$,
depending on the value of $s_2[t]$.  This correction can be done in
$\Oh(1)$ time using bit shifting and bitwise arithmetic.

If $\beta_i$ is min-bad, then we store the answer explicitly.  This
can be done by storing the answer for each bad $\beta_i$ in an array
of size $z \lg n$ bits, where $z$ is the number of bad blocks.  Since
$z \le \lceil \frac{n}{c \lg n} \rceil $ this is $\lceil \frac{n}{c}
\rceil $ bits in total.  We also must store yet another bit vector,
encoded using Lemma~\ref{lem:rrr}, marking the start of the min-bad
blocks, which occupies another $\Oh(\frac{n \lg \lg n}{\lg n})$ bits
by a similar calculation as before.  Thus, we can recover any block in
$B_{\min}$ using $3n + \lceil \frac{n}{c}\rceil + o(n)$ bits in
$\Oh(c) = \Oh(1)$ time.  

In fact, by examining the structure of Lemma~\ref{lem:rrr} in more
detail we can argue that it compresses $T$ slightly for each bad
block, to get a better space bound than $2n+o(n)$ bits. Consider all
the min-bad blocks $\beta_1, ..., \beta_z$ in $B_{\min}$ and the
max-bad blocks $\beta'_1, ..., \beta'_{z'}$ in $B_{\max}$.  For a
given min-bad block $\beta_i$, any max-bad block $\beta'_j$ can only
overlap its first or last $2 \lg n$ bits in $T$.  This follows since
each bit in $T$ corresponds to at least one bit in either $T_{\min}$
or $T_{\max}$, and because less than half of these $2\lg n$ bits can
correspond to bits in $T_{\min}$ (since the block is min-bad). Thus,
each bad block has a middle part of at least $(c-4)\lg n$ bits, which
are not overlapped by any other bad block.  We furthermore observe
that these $(c-4)\lg n$ middle bits are highly compressible, since
they contain at most $\lg n$ one bits, by the definition of a bad
block.  Since these $(c-4)\lg n$ middle bits are compressed to their
zeroth-order entropy in chunks of $\frac{\lg n}{2}$ consecutive bits
by Lemma~\ref{lem:rrr}, we get that the space occupied by each of them
is at most
$$\left\lceil \lg \binom{(c-4)\lg n}{\lg n} \right\rceil + \Theta(c)
\le (c-4)H\left(\frac{1}{c-4}\right) \lg n + \Theta(c) \enspace.$$ 
\noindent
The cost of explicitly storing the answer for the bad block was $\lg
n$ bits. Since $c \ge 9$, and assuming $n$ is sufficiently large, we
get that this additional $\lg n$ bits of space can be added to the
cost of storing the middle part of the bad block in compressed form,
without exceeding the cost of storing the middle part of the bad block
in uncompressed form.  The value of $c \ge 9$ came from a numeric
calculation by finding the first value of $c$ such that
$(c-4)H(\frac{1}{c-4}) + 1 < (c-4)$.  Thus, the total space bound is
$3n+o(n)$ bits.  \qed
\end{proof}

\subsection{Lower Bound for Range Min-Max Queries}

Given a permutation $\pi = (p_1, ..., p_n)$, we say $\pi$ contains the
permutation pattern $s_1\text{-}s_2\text{-}...\text{-}s_m$ if there
exists a subsequence of $\pi$ whose elements have the same relative
ordering as the elements in the pattern.  That is, there exist some
$x_1< x_2 <...< x_m \in [1,n]$ such that for all $i,j \in [1,m]$ we
have that $\pi(x_i) < \pi(x_j)$ if and only if $s_i < s_j$.  For
example, if $\pi = (1,4,2,5,3)$ then $\pi$ contains the permutation
pattern $1\text{-}3\text{-}4\text{-}2$: we use this hyphen notation to
emphasize that the indices need not be consecutive. In this case, the
series of indices in $\pi$ matching the pattern are $x_1 = 1$, $x_2 =
2$, $x_3 = 4$ and $x_4 = 5$.  If no hyphen is present between elements
$s_i$ and $s_{i+1}$ in the permutation pattern, then the indices $x_i$
and $x_{i+1}$ must be consecutive: i.e., $x_{i+1} = x_i +1$.  In terms
of the example, $\pi$ does not contain the permutation pattern
$1\text{-}34\text{-}2$.

A permutation $\pi = (p_1, ... , p_n)$ is a \emph{Baxter permutation}
if there exist no indices $1 \le i < j < k \le n$ such that $\pi(j+1)
< \pi(i) < \pi(k) < \pi(j)$ or $\pi(j) < \pi(k) < \pi(i) < \pi(j+1)$.
Thus, Baxter permutations are those that do not contain $\forbMMTwo$
and $\forbMMOne$.  Permutations with less than $4$ elements are
trivially Baxter permutations, and for permutations on $4$ elements
the non-Baxter permutations are exactly $(2,4,1,3)$ and $(3,1,4,2)$.
Baxter permutations are well studied, and their asymptotic behaviour
is known (see, e.g., OEIS A001181~\cite{OEIS}).

We have the following lemma:

\begin{lemma}\label{lem:baxter}
Suppose $\pi$ is a Baxter permutation, stored in an array $A[1..n]$
such that $A[i] = \pi(i)$.  If an encoding that can recover all range
minimum and maximum queries is constructed on $A$, then $\pi$ can be
recovered from the encoding.
\end{lemma}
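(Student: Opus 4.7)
The plan is to prove by induction on $n$ that every Baxter permutation $\pi$ is uniquely recoverable from its pair of min and max Cartesian trees. Since the encoding supplies all range min and max answers, and hence both Cartesian trees (cf.\ Lemma~\ref{lem:aux-index}), recovering $\pi$ from the trees implies the lemma. The base cases $n \le 3$ are immediate: all such permutations are trivially Baxter and the two trees jointly determine every pairwise comparison.

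For the inductive step, I would read $m = \pi^{-1}(n)$ from the root of the max Cartesian tree and split $\pi$ into $\pi_L = \pi[1..m-1]$ and $\pi_R = \pi[m+1..n]$. Both restrictions are Baxter, since any occurrence of $\forbMMOne$ or $\forbMMTwo$ in a sub-range is also an occurrence in the whole array. The min and max trees of these sub-ranges can be extracted from the global trees via range queries, so the inductive hypothesis applies and recovers the internal relative orders of $\pi_L$ and $\pi_R$. What remains is to determine the interleaving of the value sets $S_L = \pi(\{1,\ldots,m-1\})$ and $S_R = \pi(\{m+1,\ldots,n\})$, i.e., for every cross-pair $i < m < j$, to decide which of $\pi(i), \pi(j)$ is larger.

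I would establish the interleaving by contradiction. Suppose two distinct Baxter permutations $\pi, \pi'$ share both min and max Cartesian trees. Then some cross-pair $i < m < j$ has $\pi(i),\pi(j)$ in opposite relative order to $\pi'(i),\pi'(j)$; for such a pair, the min of $[i,j]$ must lie at an interior position $\ell \in (i,j) \setminus \{m\}$ in both permutations (otherwise the min tree would directly force the comparison, since the max of $[i,j]$ is at the interior $m$). WLOG $\ell > m$ and $\pi(i) < \pi(j)$. Let $k^* \in [m,\ell-1]$ be the rightmost position with $\pi(k^*) > \pi(j)$; this is well-defined because $k^* = m$ qualifies, and by maximality $\pi(k^*+1) \le \pi(j)$. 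If $\pi(k^*+1) < \pi(i)$, then the four positions $(i, k^*, k^*+1, j)$ carry values of ranks $(2,4,1,3)$ with adjacent middle indices, realizing $\forbMMTwo$ in $\pi$ and contradicting Baxter-ness. Otherwise $\pi(i) < \pi(k^*+1) < \pi(j)$, and I would apply the symmetric argument to $\pi'$ (where $\pi'(i) > \pi'(j)$) to produce an occurrence of $\forbMMOne$ there.

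The main obstacle I foresee is precisely this last case. The positional witness constructed from $\pi$ does not transfer mechanically to $\pi'$, because the values $\pi(k^*), \pi(k^*+1)$ need not equal $\pi'(k^*), \pi'(k^*+1)$ even though the positions of all range extrema agree. Resolving this will require iterating the extremal-position selection inside $[m,\ell]$ (and dually inside $[m,\ell]$ under $\pi'$) to ensure that some adjacent pair of positions always forces an occurrence of $\forbMMOne$ or $\forbMMTwo$ in one of $\pi, \pi'$, thereby terminating the argument. Getting this case analysis airtight is the crux.
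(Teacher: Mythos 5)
Your route differs substantively from the paper's, and, as you yourself flag, the final case leaves a genuine gap. The paper does not split at the global maximum. It applies the induction hypothesis to the two Baxter sub-permutations on the prefix $A[1..n-1]$ and the suffix $A[2..n]$. That choice is what makes the rest short: every pairwise comparison except $A[1]$ versus $A[n]$ is then already known, so there is exactly one unknown bit to determine, not a whole interleaving of $S_L$ with $S_R$. The paper takes $x = \rmin(A[1..n])$ and $y = \rmax(A[1..n])$, reduces (by symmetry) to $x, y \in [2,n-1]$ with $x < y$, and for each $i \in [x,y]$ inspects the already-known comparisons of $A[i]$ with $A[1]$ and with $A[n]$: either one of them yields a chain and we are done, or $A[i]$ is the minimum or the maximum of the triple. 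Setting $f(i) = 0$ or $1$ accordingly, $f(x) = 0$ and $f(y) = 1$, so some consecutive $i, i+1$ have $f(i) = 0, f(i+1) = 1$; if $A[1] > A[n]$ this gives $A[i] < A[n] < A[1] < A[i+1]$ at positions $1 < i < i+1 < n$, an occurrence of $\forbMMOne$. Hence $A[1] < A[n]$, and no contradiction with a second permutation $\pi'$ is ever invoked.

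The gap you point out in your own last case is real, not cosmetic. In the branch $\pi(i) < \pi(k^*+1) < \pi(j)$ you would hand a ``symmetric argument'' a pair of adjacent positions whose values you only control in $\pi$; what $\pi$ and $\pi'$ actually share are extremal \emph{positions}, not values, so the ranks of $\pi'(k^*)$ and $\pi'(k^*+1)$ relative to $\pi'(i), \pi'(j)$ are unconstrained by what you have established. Iterating the extremal-position selection inside $[m,\ell]$ might work, but it needs a well-founded invariant and a termination argument that you have not supplied, and the set of ``undetermined cross-pairs'' you would need to control can be large. The cleaner fix is to adopt the prefix/suffix decomposition: it collapses the entire interleaving problem to a single comparison, after which the forbidden pattern falls out directly from two consecutive indices without ever reasoning about a hypothetical second permutation.
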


\begin{proof}
In order to recover the permutation, it suffices to show that we can
perform pairwise comparisons on any two elements in $A$ using range
minimum and range maximum queries. The proof follows by induction on
$n$.

For the base case, for $n=1$ there is exactly one permutation, so there
is nothing to recover. Thus, let us assume that the lemma holds for all
permutations on less than $n\geq 2$ elements.  For a permutation on $n$
elements, consider the sub-permutation induced by the array prefix
$A[1..(n-1)]$ and suffix $A[2..n]$.  These subpermutations must be
Baxter permutations, since deleting elements from the prefix or suffix
of a Baxter permutation cannot create a $\forbMMTwo$ or a $\forbMMOne$.  Thus,
it suffices to show that we can compare $A[1]$ and $A[n]$, as all
the remaining pairwise comparisons can be performed by the induction
hypothesis.

Let $x = \rmin(A[1..n])$ and $y = \rmax(A[1..n])$ be the indices of
the minimum and maximum elements in the array, respectively. If
$x\in\{1,n\}$ or $y\in\{1,n\}$ we can compare $A[1]$ and $A[n]$, so
assume $x,y\in [2,n-1]$.  Without loss of generality we consider the
case where $x < y$: the opposite case is symmetric (i.e., replacing
$\forbMMOne$ with $\forbMMTwo$), and $x\neq y$ because $n\geq
2$. Consider an arbitrary index $i \in [x,...,y]$, and the result of
comparing $A[1]$ to $A[i]$ and $A[i]$ to $A[n]$ (that can be done by
the induction hypothesis, as $i\in[2,n-1]$).  The result is a partial
order on three elements, and is either:
\begin{enumerate}
  \item One of the two chains $A[1] < A[i] < A[n]$ or $A[n] < A[i] <
    A[1]$, in which case we are done since $A[1]$ and $A[n]$ can be
    compared; or
 \item A partial order in which $A[i]$ is the minimum or maximum
   element, and $A[1]$ is incomparable with $A[n]$.
\end{enumerate}
If we are in the latter case for all $i \in [x,y]$, then let $f(i) =
0$ if $A[i]$ is the minimum element in this partial order, and $f(i) =
1$ otherwise. Because of how $x$ and $y$ were chosen, $f(x) = 0$ and
$f(y) = 1$. If we consider the values of $f(i)$ for all $i \in [x,y]$,
there must exist two indices $i,i+1 \in [x,y]$ such that $f(i)=0$ and
$f(i+1)=1$.  Therefore, the indices $1, i, i+1, n$ form the forbidden
pattern $\forbMMOne$, unless $A[1] < A[n]$.  \qed
\end{proof}

\begin{theorem}\label{thm:min-max-lb}
Any data structure encoding range minimum and maximum queries
simultaneously must occupy $3n - \Theta(\log n)$ bits, for
sufficiently large values of $n$.
\end{theorem}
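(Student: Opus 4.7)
The plan is to leverage Lemma~\ref{lem:baxter} directly: since any min-max encoding on an array whose values form a Baxter permutation $\pi$ allows us to recover $\pi$ in full, two distinct Baxter permutations on $[1,n]$ cannot share the same encoding. Hence the encoding must use at least $\lceil \lg B_n \rceil$ bits, where $B_n$ denotes the number of Baxter permutations on $n$ elements.

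The remaining work is purely enumerative. I would cite the classical closed form
\[
  B_n \;=\; \sum_{k=1}^{n} \frac{\binom{n+1}{k-1}\binom{n+1}{k}\binom{n+1}{k+1}}{\binom{n+1}{1}\binom{n+1}{2}},
\]
together with its well-known asymptotic behaviour $B_n = \Theta(8^n / n^4)$ (OEIS A001181, already referenced earlier in the paper). Taking base-$2$ logarithms gives $\lg B_n = 3n - 4\lg n + \Theta(1) = 3n - \Theta(\lg n)$, which is exactly the claimed lower bound for sufficiently large $n$.

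To make the argument airtight I would spell out the counting step once more: the set of arrays $A[1..n]$ whose entries are Baxter permutations forms a family of $B_n$ distinct inputs; by Lemma~\ref{lem:baxter} the encoding is an injection from this family into $\{0,1\}^*$, so the maximum code length is at least $\lceil \lg B_n \rceil$. Combined with the asymptotic estimate this yields the theorem. The only mildly delicate point is justifying the $\Theta(\lg n)$ (as opposed to $O(\lg n)$) lower-order term, which follows because the $n^{-4}$ factor in $B_n$ contributes exactly $-4 \lg n + O(1)$, matched by a constant-order $O(1)$ term from the leading constant in the asymptotic expansion. I expect no real obstacle here beyond quoting the standard asymptotics; the conceptual work is entirely contained in the Baxter-permutation reduction already established in Lemma~\ref{lem:baxter}.
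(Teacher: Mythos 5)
Your proposal is correct and takes essentially the same route as the paper: both invoke Lemma~\ref{lem:baxter} to conclude that the encoding must distinguish all Baxter permutations, and then apply the known asymptotics $B_n = \Theta(8^n/n^4)$ (the paper states the sharper limit $\lim_{n\to\infty} B_n\,\pi\sqrt{3}\,n^4/2^{3n+5}=1$) to obtain $\lg B_n = 3n - \Theta(\lg n)$. No substantive difference.
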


\begin{proof}
Let $L(n)$ be the number of Baxter permutations on $n$ elements.  It
is known (cf.~\cite{OEIS}) that $\lim_{n \to \infty} \frac{L(n)
  \pi\sqrt{3}n^4}{2^{3n+5}} = 1$. Since we can encode and recover each
one by the procedure discussed in Lemma~\ref{lem:baxter}, our encoding
data structure must occupy at least $\lg L(n) = 3n-\Theta(\log n)$
bits, if $n$ is sufficiently large. \qed
\end{proof}

\section{Optimal Encodings for Top-\texorpdfstring{$k$}{k} Queries\label{sec:top-k}}

In this section we use $\rtopk(A[i..j])$ to denote a range
top-$k$ query on the subarray $A[i..j]$.  The solution to such a query
is an ordered list of indices $\{\ell_1, ..., \ell_k\}$ such that
$A[\ell_m]$ is the $m$-th largest element in $A[i..j]$.

\subsection{Upper Bound for Encoding Top-\texorpdfstring{$k$}{k} Queries\label{sec:encoding-topk}}

Like the encoding for range min-max queries, our encoding for range
top-$k$ queries is based on representing the changes to a certain
structure as we scan through the array $A$.  Each prefix in the array
will correspond to a different structure. We denote the structure,
that we will soon describe, for prefix $A[1..j]$ as $S_k(j)$, for all
$1 \le j \le n$. The structure $S_k(j)$ will allow us to answer
$\rtopk(A[i..j])$ for any $i \in [1,j]$. Our encoding will store the
differences between $S_k(j)$ and $S_k(j+1)$ for all $j \in [1,n-1]$.
Let us begin by defining a single instance for an arbitrary $j$.

We first define the directed graph $G_{j} = (V,E)$ with vertices
labelled $\{1, ..., j\}$, and where an edge $(i',j') \in E$ iff both
$i' < j'$ and $A[i'] < A[j']$ for all $1 \le i' < j' \le j$.  We call
$G_{j}$ the \emph{dominance graph} of $A[1..j]$, and say $j'$
\emph{dominates} $i'$, or $i'$ is dominated by $j'$, if $(i',j') \in
E$.  Next consider the out-degree $d_{j}(\ell)$ of the vertex labelled
$\ell \in [1,j]$ in $G_j$. We define an array $S[1..j]$, where
$S[\ell] = d_{j}(\ell)$ for $1 \le \ell \le j$.  The structure
$S_k(j)$ is defined as follows: take the array $S[1..j]$, and for each
entry $\ell \in [1,j]$ such that $S[\ell] > k$, replace $S[\ell]$ with
$k$.  We use the notation $S_k(j,\ell)$ to refer to the $\ell$-th
array entry in the structure $S_k(j)$. We refer to an index $\ell$ to
be \emph{active} iff $S_k(j,\ell) < k$, and as \emph{inactive}
otherwise.  We note that $S_k(n)$ is reminiscent of the one-sided
top-$k$ structure of Grossi et al.~\cite{GINRS13}.

\begin{figure}
\centering
\includegraphics[width=0.9\textwidth]{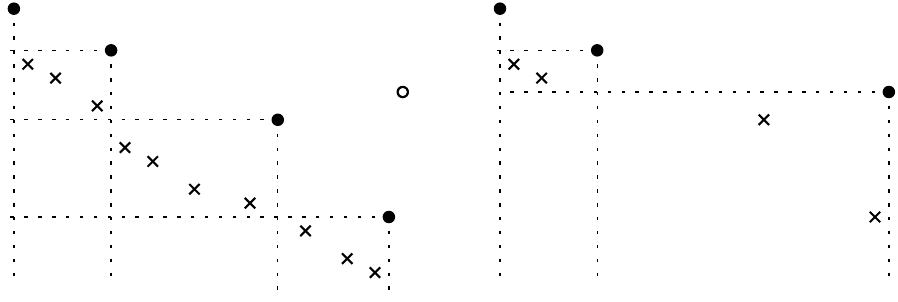}
\caption{\label{fig:update}Geometric interpretation of how the
  structure $S_k(j)$ is updated to $S_k(j+1)$.  In the example $k =
  2$, and the value of each active element in the array is represented
  by its height.  Black circles denote $0$ values in the array
  $S_2(j)$, whereas crosses represent $1$ values, and $2$ values
  (inactive elements) are not depicted.  When the new point (empty
  circle) is inserted to the structure on the left, it increments the
  counters of the smallest $10$ active elements, resulting in the
  picture on the right representing $S_2(j+1)$.}
\end{figure}

\begin{lemma}
\label{lem:active}
The total ordering of elements $A[i_1], ..., A[i_{j'}]$,
where $\{i_1, ..., i_{j'}\}$ are the active indices in
$S_k(j)$, can be recovered by examining only $S_k(j)$.
\end{lemma}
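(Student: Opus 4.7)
My plan is to reduce the lemma to a single structural fact about how inactive positions in $S_k(j)$ relate to active ones lying to their left. Once that fact is established, the (uncapped) counter stored at every active index $i_m$ will coincide with the rank of $A[i_m]$ inside the active suffix $A[i_m], A[i_{m+1}], \ldots, A[i_{j'}]$, and a simple right-to-left sweep recovers the total order.

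\textbf{Key claim (main obstacle).} The crucial step is to prove the following: \emph{if $\ell' < \ell$ with $\ell'$ active and $\ell$ inactive, then $A[\ell'] > A[\ell]$.} Suppose instead $A[\ell'] \le A[\ell]$. Inactivity of $\ell$ gives $d_j(\ell) \ge k$, so at least $k$ indices in $(\ell, j]$ hold values exceeding $A[\ell]$ and hence also $A[\ell']$. Since these indices all lie in $(\ell', j]$, this would force $d_j(\ell') \ge k$, contradicting the activity of $\ell'$. This is the only step that directly exploits the definition of inactivity; without it, a large inactive value sitting to the right of some $i_m$ could inflate $d_j(i_m)$ without contributing to any comparison between active elements, and the counters alone would leave ambiguities.

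\textbf{Reconstruction from the claim.} Using the claim, every index $\ell > i_m$ with $A[\ell] > A[i_m]$ must itself be active; moreover, since $i_m$ is active we have $d_j(i_m) < k$ and so no truncation has occurred. Hence
$$S_k(j, i_m) \;=\; d_j(i_m) \;=\; \bigl|\{m' > m : A[i_{m'}] > A[i_m]\}\bigr|,$$
so $S_k(j, i_m) + 1$ is exactly the rank of $A[i_m]$ inside the active suffix $A[i_m], A[i_{m+1}], \ldots, A[i_{j'}]$. Scanning $m$ from $j'$ down to $1$, I maintain a list representing the order-type of the active values inserted so far and place $A[i_m]$ so that precisely $S_k(j, i_m)$ previously inserted entries lie above it. Because every rank used during this insertion is read directly from $S_k(j)$, the resulting total order on $\{A[i_1], \ldots, A[i_{j'}]\}$ is a function of $S_k(j)$ alone, as required.
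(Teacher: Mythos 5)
Your proposal is correct and takes essentially the same approach as the paper: a right-to-left scan inserting each active element into a maintained total order according to its (uncapped) counter. The only difference is that you make explicit, with a short contradiction argument, the fact the paper states without proof ("an inactive element cannot dominate an active element in $G_j$"), which is a welcome clarification rather than a different route.
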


\begin{proof}
We scan the structure $S_k(j)$ from index $j$ down to $1$, maintaining
a total ordering on the active elements seen so far.  Initially, we
have an empty total ordering.  At each active location $\ell$ the
value $S_k(j,\ell)$ indicates how many active elements in locations
$[\ell+1,j]$ are larger than $A[\ell]$.  This follows since an
inactive element cannot dominate an active element in the graph
$G_j$. Thus, we can insert $A[\ell]$ into the current total ordering
of active elements. \qed
\end{proof}

We define the \emph{size} of $S_k(j)$ as follows: $|S_k(j)| =
\sum_{\ell = 1}^{j} (k - S_k(j,\ell))$.  The key observation is that
the structure $S_k(j+1)$ can be constructed from $S_k(j)$ using the
following procedure:

\begin{enumerate}
\item Compute the value $\delta_{j} = |S_k(j)| - |S_k(j+1)| + k$. This
  quantity is always nonnegative, as we add one new element to the
  large staircase, which increases the size by at most $k$.

\item Find the $\delta_j$ indices among the active elements in
  $S_k(j)$ such that their values in $A$ are the smallest via
  Lemma~\ref{lem:active}.  Denote this set of indices as
  $\mathcal{I}$.

\item For each $\ell \in [1,j]$, set $S_k(j+1,\ell) = S_k(j,\ell)+1$
  iff $\ell \in \mathcal{I}$, and $S_k(j+1,\ell) = S_k(j,\ell)$
  otherwise.

\item Add the new element at the end of the array, setting
  $S_k(j+1,j+1) = 0$.
\end{enumerate}

Thus, to construct $S_k(j+1)$ all that is needed is $S_k(j)$ and the
value $\delta_j$: see Figure~\ref{fig:update}.  This implies that by
storing $\delta_j$ for $j \in [1,n-1]$ we can build any $S_k(j)$.

\begin{theorem}
\label{thm:top-k-ub}
Solutions to all queries $\rtopk(A[i..j])$ can be encoded in at most
$(k+1)nH(\frac{1}{k+1})$ bits of space.
\end{theorem}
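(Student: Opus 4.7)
The plan is to adopt the sequence $(\delta_1,\ldots,\delta_{n-1})$ itself as the encoding, and split the proof into a correctness part and a counting part.

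For correctness, I would first observe that iterating the four-step update rule on the trivial initial structure $S_k(1)$ rebuilds each $S_k(j)$ from the $\delta$-values alone: the only non-obvious ingredient of an update step is the set $\mathcal{I}$ of the $\delta_j$ active indices whose values are smallest, and Lemma~\ref{lem:active} recovers the total order of the active values directly from $S_k(j)$ with no side information. I would then argue that $S_k(j)$ on its own answers $\rtopk(A[i..j])$ for any $i\in[1,j]$: a position $\ell\in[i,j]$ that is inactive in $S_k(j)$ is already dominated by at least $k$ indices of $[\ell+1,j]\subseteq[i,j]$ and hence cannot belong to the top-$k$ of $A[i..j]$, so the answer consists of the $k$ largest active values whose indices lie in $[i,j]$, which Lemma~\ref{lem:active} returns in sorted order.

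For the counting, the key structural observation is that $\delta_j$ is exactly the number of active counters incremented when passing from $S_k(j)$ to $S_k(j+1)$, and each individual counter can be incremented at most $k$ times before becoming permanently inactive. Since only positions $1,\ldots,n-1$ can ever be incremented (position $n$ is created at the final step and never updated), this yields the global constraint $\sum_{j=1}^{n-1}\delta_j \leq k(n-1)$. Introducing a slack variable turns the count of length-$(n-1)$ non-negative integer sequences obeying this bound into a standard stars-and-bars computation giving at most $\binom{(k+1)(n-1)}{n-1}\leq\binom{(k+1)n}{n}$ valid encodings. Encoding each sequence by its rank in this class and applying the textbook inequality $\binom{m}{pm}\leq 2^{mH(p)}$ with $m=(k+1)n$ and $p=\frac{1}{k+1}$ then yields the claimed $(k+1)nH(\frac{1}{k+1})$-bit bound.

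The main obstacle I expect is conceptual rather than computational: one must be confident that the single non-negative integer $\delta_j$ really does suffice to drive the update from $S_k(j)$ to $S_k(j+1)$ without the encoder having to name which active indices in $\mathcal{I}$ were incremented. Lemma~\ref{lem:active} is precisely what enables this, since it canonically identifies the $\delta_j$ smallest active values from $S_k(j)$ alone; without such a lemma the encoder would have to spend $\Omega(\lg\binom{j}{\delta_j})$ extra bits per step naming $\mathcal{I}$, destroying the counting argument.
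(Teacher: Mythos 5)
Your proposal is correct and follows essentially the same approach as the paper: both encode the array of $\delta$-values, both rely on Lemma~\ref{lem:active} to canonically identify the set $\mathcal{I}$ so that $\delta_j$ alone drives the reconstruction, and both bound $\sum_j \delta_j$ by $kn$ via the ``each counter is incremented at most $k$ times'' argument before invoking the binomial/entropy inequality. The only cosmetic difference is that you count $\delta$-sequences directly via stars-and-bars and arithmetic coding, whereas the paper phrases the same estimate as compressing the unary bitvector $0^{\delta_1}1\cdots0^{\delta_{n-1}}1$; these are the same bound.
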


\begin{proof}
Suppose we store the bitvector $0^{\delta_1}10^{\delta_2}1\ldots
0^{\delta_{n-1}}1$.  This bitvector contains no more than $kn$ zero
bits.  This follows since each active counter can be incremented $k$
times before it becomes inactive.  Thus, storing the bitvector
requires no more than $\lg \binom{(k+1)n}{n} \le
(k+1)nH(\frac{1}{k+1})$ bits.

Next we prove that this is all we need to answer a query $\rtopk(A[i..j])$.  We
use the encoding to construct $S_k(j)$. We know that
for every element at inactive index $\ell$ in $S_k(j)$ there are at
least $k$ elements with larger value in $A[\ell + 1..j]$. Consequently,
these elements need not be returned in the solution, and
it is enough to recover the indices of the top-$k$ values among the elements
at active indices at least $i$. We apply  Lemma~\ref{lem:active} on $S_k(j)$ to
recover these indices and return them as the solution.
\qed
\end{proof}

\subsection{Lower Bound for Encoding Top-\texorpdfstring{$k$}{k} Queries}

The goal of this section is to show that the encoding from
Section~\ref{sec:encoding-topk} is, in fact, optimal. The first
observation is that all structures $S_k(j)$ for $j \in [1,n]$ can be
reconstructed with $\rtopk$ queries.

\begin{lemma}
\label{lem:reconstruction-topk}
Any $S_k(j)$ can be reconstructed with $\rtopk$ queries.
\end{lemma}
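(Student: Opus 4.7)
My plan is to show that for each index $\ell \in [1,j]$ a single range top-$k$ query on the suffix $A[\ell..j]$ of $A[1..j]$ suffices to determine $S_k(j,\ell)$. Recall from the definition that $S_k(j,\ell)$ equals $\min(k, d)$, where $d = |\{m : \ell < m \le j,\ A[\ell] < A[m]\}|$ is the out-degree of $\ell$ in the dominance graph $G_j$. Thus it is enough to learn $d$ whenever $d < k$, and to certify $d \ge k$ otherwise; range top-$k$ queries give exactly this information.

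Concretely, for each $\ell \in [1,j]$ I would issue the query $\rtopk(A[\ell..j])$ (using the full parameter $k$) and inspect the returned ordered list of indices. There are two cases. If $\ell$ appears in the returned list as the $r$-th index (for some $r \in [1,k]$), then by definition of top-$k$ exactly $r-1$ elements of $A[\ell..j]$ are strictly larger than $A[\ell]$; since $A[\ell]$ is itself at position $\ell$, these larger elements all lie in $A[\ell+1..j]$, so $d = r-1 < k$ and consequently $S_k(j,\ell) = r-1$. If $\ell$ does not appear in the returned list, then there are at least $k$ indices in $[\ell,j]$ with values larger than $A[\ell]$, again all forced to lie in $[\ell+1,j]$; hence $d \ge k$ and $S_k(j,\ell) = k$. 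One must also handle the degenerate case in which the query range contains fewer than $k$ elements, i.e.\ $j-\ell+1 \le k$, but then the top-$k$ answer necessarily lists every index in the range (in sorted order of value), so $\ell$ definitely appears and the analysis of the first case applies verbatim.

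Repeating this for all $\ell \in [1,j]$ recovers every entry of $S_k(j)$ using $j$ range top-$k$ queries, which proves the lemma. There is no real obstacle here: the only subtlety is being careful that the rank of $\ell$ inside the answer to $\rtopk(A[\ell..j])$ exactly measures the (capped) out-degree of $\ell$ in $G_j$, which follows because the query range is a suffix starting at $\ell$ and therefore every element that can contribute to the out-degree of $\ell$ lies in the range, while no element to the left of $\ell$ can interfere.
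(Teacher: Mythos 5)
Your proof is correct and matches the paper's argument exactly: for each $\ell \in [1,j]$, query $\rtopk(A[\ell..j])$, read off the rank of $\ell$ in the answer to set $S_k(j,\ell) = r-1$ when $\ell$ appears at rank $r$, and set $S_k(j,\ell) = k$ when $\ell$ is absent. The only addition is the explicit handling of query ranges with fewer than $k$ elements, which the paper implicitly absorbs into the same case analysis.
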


\begin{proof}
To reconstruct $S_k(j)$, we execute the query $\rtopk(A[\ell..j])$ for
each $\ell \in [1,j]$.  If index $\ell$ is returned as the $k'$-th
largest element in $[\ell,j]$, then by definition there are exactly
$k' -1$ elements in locations $A[\ell+1 ..j]$ with value larger than
$A[\ell]$.  Thus, $\ell$ is an active location and $S_k(j,\ell) =
k'-1$.  If $\ell$ is not returned by the query, then it is inactive
and we set $S_k(j,\ell) = k$.  \qed
\end{proof}

Recall that we encode all structures by specifying
$\delta_{1},\delta_{2},\ldots,\delta_{n-1}$.  We call an $(n-1)$-tuple
of nonnegative integers $(\delta_{1},\delta_{2},\ldots,\delta_{n-1})$
\emph{valid} if it encodes some $S_k(1),S_k(2),\ldots,S_k(n)$, i.e.,
if there exists at least one array $A[1..n]$ consisting of distinct
integers such that the structure constructed for $A[1..j]$ is exactly
the encoded $S_k(j)$, for every $j=1,2,\ldots,n$. Then the number of
bits required by the encoding is at least the logarithm of the number
of valid $(n-1)$-tuples
$(\delta_{1},\delta_{2},\ldots,\delta_{n-1})$. Our encoding from
Section~\ref{sec:encoding-topk} shows this number is at most
$\binom{(k+1)n}{n}$, but we need to argue in the other direction,
which is far more involved.

Recall that the size of a particular $S_k(j)$ is $|S_k(j)| = \sum_{i =
  1}^{j} (k - S_k(j,i))$.  We would like to argue that there are many
valid $(n-1)$-tuples $(\delta_1,\delta_2,\ldots,\delta_{n-1})$.  This
will be proven in a series of transformations.

\begin{lemma}
\label{lem:valid-tuple-topk}
If $(\delta_{1},\delta_{2},\ldots,\delta_{n-1})$ is valid, then for
any $\delta_{n}\in\{0,1,\ldots,\left\lceil\frac{M}{k}\right\rceil\}$
where $M = \sum_{i=1}^{n-1}(k-\delta_{i})$, the tuple
$(\delta_{1},\delta_{2},\ldots,\delta_{n-1},\delta_{n})$ is also
valid.
\end{lemma}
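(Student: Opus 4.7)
The plan is to realize the extended tuple constructively: fix any array $A[1..n]$ witnessing that $(\delta_1, \ldots, \delta_{n-1})$ is valid and append a single integer $A[n+1]$ whose rank among the existing entries is chosen to force the desired $\delta_n$. Because the encoding of $\delta_1, \ldots, \delta_{n-1}$ depends only on $A[1..n]$, which remains untouched, the only thing left to prove is that for every $d \in \{0, 1, \ldots, \lceil M/k \rceil\}$ there is an integer $A[n+1]$ producing $\delta_n = d$.

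First I would identify exactly what $\delta_n$ measures as a function of $A[n+1]$. Unrolling the update rule from Section~\ref{sec:encoding-topk}, an inactive index has its $S_k$ entry frozen at $k$, whereas each active index $\ell$ with $A[\ell] < A[n+1]$ has its counter incremented by exactly one. A short size computation then gives $|S_k(n+1)| = |S_k(n)| + k - a$, where $a$ is the number of active indices $\ell$ with $A[\ell] < A[n+1]$; hence $\delta_n = a$. By Lemma~\ref{lem:active}, the active elements of $S_k(n)$ admit a total order $a_1 < a_2 < \cdots < a_t$ that can be read from $S_k(n)$ alone, so to realize $\delta_n = d$ it suffices to choose $A[n+1]$ strictly between $a_d$ and $a_{d+1}$ (or below $a_1$ when $d = 0$). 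After first doubling every entry of $A$, such a distinct integer always exists.

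The remaining ingredient is the lower bound $t \geq \lceil M/k \rceil + 1$. From $|S_k(1)| = k$ together with the identity $|S_k(j+1)| = |S_k(j)| + k - \delta_j$ (immediate from the definition of $\delta_j$), a telescoping sum yields $|S_k(n)| = k + M$. Each active index contributes between $1$ and $k$ to $|S_k(n)|$ while inactive indices contribute $0$, so $t \geq |S_k(n)|/k = 1 + M/k$, and integrality upgrades this to $t \geq \lceil M/k \rceil + 1$. In particular $d \leq \lceil M/k \rceil < t$ for every $d$ in the allowed range, so the required gap in the active ordering is guaranteed.

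The main obstacle is precisely this last counting step: one has to recognise that the cap of $k-1$ on each active counter is exactly what converts the easy size identity $|S_k(n)| = M + k$ into a lower bound on the number of active indices that matches the range $\{0, \ldots, \lceil M/k \rceil\}$ claimed by the lemma — if the bound $t \geq \lceil M/k \rceil + 1$ failed by even one, we would lose the endpoint $d = \lceil M/k \rceil$. Everything else reduces to an unpacking of the update procedure defining the encoding.
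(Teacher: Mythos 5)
Your proposal is correct and follows essentially the same route as the paper: hold a witness array fixed, append $A[n+1]$ at the rank needed to realize any $\delta_n$ up to the number $t$ of active indices in $S_k(n)$, and lower-bound $t$ via the size identity $|S_k(n)| = M + k$ together with the observation that each active index contributes at most $k$ to $|S_k(n)|$. The only cosmetic differences are that you make the telescoping derivation of $|S_k(n)| = M + k$ explicit and bound $t$ directly rather than through the paper's per-value counts $m_\alpha$.
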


\begin{proof}
Let $A[1..n]$ be an array such that the structure constructed for
$A[1..j]$ is exactly $S_k(j)$, for every $j=1,2,\ldots,n$. By
definition of $\delta_{j}$, we have that $M =
\sum_{i=1}^{n-1}(k-\delta_{i})< |S_k(n)|$.  Denote the number of
active elements in $S_{k}(j)$ with the corresponding entry set to
$\alpha$ as $m_\alpha$ for $\alpha \in [0,k-1]$. For any $s \in
\{0,1,\ldots, \sum_{\alpha = 0}^{k-1}m_\alpha \}$, we can adjust
$A[n+1]$ so that it is larger than exactly the $s$ smallest active
elements in $S_k(n)$. Thus, choosing any $\delta_{n} \in
\{0,1,\ldots,\sum_{\alpha = 1}^{k}m_\alpha\}$ results in a valid
$(\delta_{1},\delta_{2},\ldots,\delta_{n})$.  Since $|S_k(n)| =
\sum_{\alpha = 0}^{k-1} (k - \alpha)m_\alpha \le k \sum_{\alpha =
  0}^{k-1} m_{\alpha}$, we have $\sum_{\alpha = 0}^{k-1} m_\alpha \geq
\left\lceil\frac{|S_k(n)|}{k}\right\rceil$, proving the claim.  \qed
\end{proof}

Every valid $(n-1)$-tuple $(a_{1},a_{2},\ldots,a_{n-1})$ corresponds
in a natural way to a walk of length $n-1$ in a plane, where we start
at $(0,0)$ and perform steps of the form $(1,a_{i})$, for
$i=1,2,\ldots,n-1$.  We consider a subset of all such walks. Denoting
the current position by $(x_{i},y_{i})$, we require that $a_{i}$ is an
integer from $[k-\left\lceil\frac{y_{i}}{k}\right\rceil,k]$. Under
such conditions, any walk corresponds to a valid $(n-1)$-tuple
$(\delta_{1},\delta_{2},\ldots,\delta_{n-1})$, because we can choose
$\delta_{i} = k - a_{i}$ and apply
Lemma~\ref{lem:valid-tuple-topk}. Therefore, we can focus on counting
such walks.

The condition $[k-\left\lceil\frac{y_{i}}{k}\right\rceil,k]$ is not
easy to work with, though. We will count more restricted walks
instead. A $Y$-restricted nonnegative walk of length $n$ starts at
$(0,0)$ and consists of $n$ steps of the form $(1,a_{i})$, where
$a_{i}\in Y$ for $i=1,2,\ldots,n$, such that the current
$y$-coordinate is always nonnegative. $Y$ is an arbitrary set of
integers.

\begin{lemma}
\label{lem:restricted-walks-topk}
The number of valid $(n-1)$-tuples is at least as large as the number
of $[k -\Delta,k]$-restricted nonnegative walks of length
$n-1-\Delta$.
\end{lemma}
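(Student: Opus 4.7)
The plan is to exhibit an injection from the set of $[k-\Delta,k]$-restricted nonnegative walks of length $n-1-\Delta$ into the set of valid $(n-1)$-tuples. Given such a walk $W = (a^R_1, \ldots, a^R_{n-1-\Delta})$, I would \emph{prepend} $\Delta$ ``boosting'' steps, each of value $k$, producing a sequence $(a_1,\ldots,a_{n-1})$ with $a_i = k$ for $i \le \Delta$ and $a_i = a^R_{i-\Delta}$ otherwise. The candidate $(n-1)$-tuple is then $(\delta_1,\ldots,\delta_{n-1})$ with $\delta_i := k - a_i$.

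To verify validity, I would apply Lemma~\ref{lem:valid-tuple-topk} inductively, extending the empty tuple one entry at a time. Writing $Y_j = \sum_{i=1}^{j}(k-\delta_i)$ for the cumulative $y$-coordinate, Lemma~\ref{lem:valid-tuple-topk} guarantees that the extension by $\delta_{j+1}$ is admissible as long as $\delta_{j+1} \in \{0,1,\ldots,\lceil Y_j/k\rceil\}$. During the prepended part ($j < \Delta$) we have $\delta_{j+1}=0$, which is always admissible. For the remaining positions, $Y_j = k\Delta + Y^R_{j-\Delta}$, where $Y^R_t$ denotes the cumulative sum of the first $t$ steps of $W$. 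Since $W$ is nonnegative, $Y^R_t \geq 0$, so $Y_j \geq k\Delta$ and hence $\lceil Y_j/k\rceil \geq \Delta$. Combined with $a^R_{j+1-\Delta} \in [k-\Delta, k]$, this gives $\delta_{j+1} \in [0,\Delta] \subseteq [0, \lceil Y_j/k\rceil]$, so Lemma~\ref{lem:valid-tuple-topk} applies and $(\delta_1,\ldots,\delta_{j+1})$ remains valid.

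Injectivity is immediate, as the first $\Delta$ entries of the constructed tuple are always $0$ while the last $n-1-\Delta$ entries encode $W$ entry by entry through $\delta_i = k - a^R_{i-\Delta}$, so $W$ can be uniquely recovered from the produced tuple. The main obstacle, though a mild one, is the bookkeeping around the convention for the ``current'' $y$-coordinate and the $+k\Delta$ offset contributed by the prepended steps; the substantive content is simply the observation that once the walk has been boosted to a $y$-coordinate of at least $k\Delta$, the weaker position-dependent constraint $a_i \geq k - \lceil y/k\rceil$ is automatically implied by the stronger uniform constraint $a_i \geq k - \Delta$, which lets us drop the awkward ceiling condition in exchange for losing $\Delta$ steps of length.
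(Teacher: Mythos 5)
Your proof is correct and is essentially the same argument as the paper's: prepend $\Delta$ steps of value $k$, observe that the resulting $y$-coordinate stays at or above $k\Delta$ so the uniform constraint $a_i \in [k-\Delta,k]$ implies the position-dependent constraint $a_i \in [k-\lceil y_i/k\rceil,k]$, and conclude via Lemma~\ref{lem:valid-tuple-topk} that the resulting $(n-1)$-tuple is valid. The paper phrases this as "distinguishing a subset of walks" while you phrase it as an explicit injection and spell out the inductive application of Lemma~\ref{lem:valid-tuple-topk}, but the content is identical.
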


\begin{proof}
We have already observed that the number of valid $(n-1)$-tuples is at
least as large as the number of walks consisting of $n-1$ steps of the
form $(1,a_{i})$, where $a_{i}\in
[k-\left\lceil\frac{y_{i}}{k}\right\rceil,k]$ for
$i=1,2,\ldots,n-1$. We distinguish a subset of such walks, where the
first $\Delta$ steps are of the form $(1,k)$, and then we always stay
above (or on) the line $y=k\Delta$. Under such restrictions, $a_{i}\in
[k-\Delta,k]$ implies $a_{i}\in
[k-\left\lceil\frac{y_{i}}{k}\right\rceil,k]$, so counting
$[k-\Delta,k]$-restricted nonnegative walks gives us a lower bound on
the number of valid $(n-1)$-tuples.  \qed
\end{proof}

We move to counting $Y$-restricted nonnegative walks of length
$n$. Again, counting them directly is non-trivial, so we introduce a
notion of $Y$-restricted returning walk of length $n$, where we ignore
the condition that the current $y$-coordinate should be always
nonnegative, but require the walk ends at $(n,0)$.

\begin{figure}
\centering
\includegraphics[width=\textwidth]{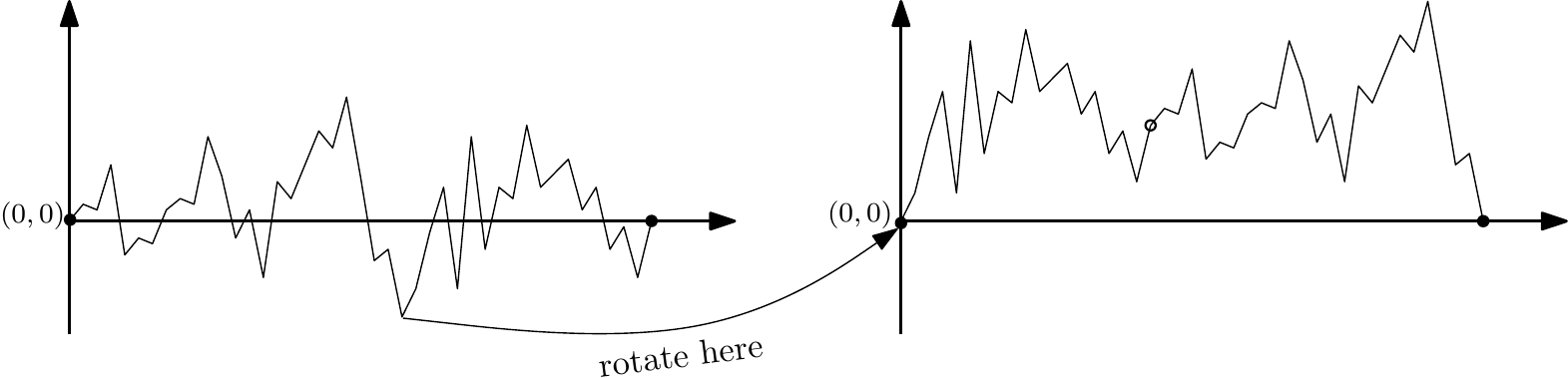}
\caption{\label{fig:cyclic-rotation-topk}Left: a $Y$-restricted walk ending
  at $(n,0)$. Right: a cyclic rotation of the walk on the left such
  that the walk is always nonnegative.}
\end{figure}

\begin{lemma}
\label{lem:cycle-lemma-topk}
The number of $Y$-restricted nonnegative walks of length $n$ is at
least as large as the number of $Y$-restricted returning walks of
length $n$ divided by $n$.
\end{lemma}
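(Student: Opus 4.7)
The plan is to adapt the classical cyclic-rotation trick behind the Dvoretzky--Motzkin cycle lemma. The key structural observation is: for any $Y$-restricted returning walk $W$ of length $n$, at least one of its $n$ cyclic rotations is itself a $Y$-restricted nonnegative walk (and in particular, still a returning walk, since cyclic rotations preserve the sum of steps, which is $0$).

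To make this precise, let $W$ have steps $a_1,\ldots,a_n \in Y$ and prefix sums $s_0 = 0, s_1, \ldots, s_n = 0$. Let $i^\star$ be the smallest index attaining $\min_i s_i$. The cyclic rotation $W^\star$ obtained by re-ordering the steps as $a_{i^\star+1}, a_{i^\star+2}, \ldots, a_n, a_1, \ldots, a_{i^\star}$ has the property that every prefix sum is $\geq 0$: each new prefix sum equals $s_{i^\star+j} - s_{i^\star}$ for $j \leq n - i^\star$, which is nonnegative by choice of $i^\star$, and similarly for the remaining indices (using $s_n = 0$). Moreover $W^\star$ still uses only step values from $Y$ and ends at height $0$, so $W^\star$ is a $Y$-restricted nonnegative returning walk.

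Now define the map $\phi$ sending each $Y$-restricted returning walk $W$ to its canonical rotation $W^\star$, which lies in the subset $R^{\geq 0} \subseteq R$ of returning walks that are also nonnegative. I claim that each $W' \in R^{\geq 0}$ has at most $n$ preimages under $\phi$: the only candidates are the $n$ cyclic rotations of $W'$, since $\phi$ only ever rotates its argument. Consequently, letting $R$ denote the set of $Y$-restricted returning walks of length $n$ and $N$ the set of $Y$-restricted nonnegative walks of length $n$, we have
\[
|R| \;\leq\; n \cdot |R^{\geq 0}| \;\leq\; n \cdot |N|,
\]
which rearranges to $|N| \geq |R|/n$, proving the lemma.

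The only subtle point is verifying that $W^\star$ is genuinely nonnegative (not merely $\geq -\varepsilon$ for some rounding), which is why I chose $i^\star$ to be the first index achieving the minimum prefix sum; this ensures that after rotation the walk stays strictly above the minimum for the initial segment and returns to the minimum exactly at the end. No delicate uniqueness argument for the preimage count is needed, since we only need the upper bound $n$, not the exact orbit size. Figure~\ref{fig:cyclic-rotation-topk} already depicts this rotation operation, so the proof should be short and self-contained.
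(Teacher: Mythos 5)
Your proof is correct and follows essentially the same approach as the paper: both define the map that cyclically rotates a returning walk into a nonnegative returning walk and bound the preimage count by $n$. The only difference is that you explicitly construct the rotation point $i^\star$ and verify nonnegativity, whereas the paper asserts the existence of a valid rotation with a pointer to the figure.
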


\begin{proof}
This follows from the so-called cycle lemma~\cite{DM47}, but we prefer
to provide a simple direct proof.  We consider only $Y$-restricted
nonnegative walks of length $n$ ending at $(n,0)$, and denote their
set by $W_1$. The set of $Y$-restricted returning walks of length $n$
is denoted by $W_2$. The crucial observation is that a cyclic rotation
of any walk in $W_2$ is also a walk in $W_2$. Moreover, there is
always at least one such cyclic rotation which results in the walk
becoming nonnegative (see
Figure~\ref{fig:cyclic-rotation-topk}). Therefore, we can define a
total function $f: W_2\rightarrow W_1$, that takes a walk $w$ and
rotates it cyclically as to make it nonnegative. Because there are
just $n$ cyclic rotations of a walk of length $n$, any element of
$W_1$ is the image of at most $n$ elements of $W_2$ through
$f$. Therefore, $|W_1| \geq \frac{|W_2|}{n}$ as claimed.  \qed
\end{proof}

The only remaining step is to count $[k-\Delta,k]$-restricted
returning walks of length $n-1-\Delta$.  This is equivalent to
counting ordered partitions of $k(n-1-\Delta)$ into parts
$a_{1},a_{2},\ldots,a_{n-1-\Delta}$, where $a_{i}\in [0,\Delta]$ for
every $i=1,2,\ldots,n-1-\Delta$.  This follows since a partition of
size $\ell$ corresponds to a step of size $k - \ell$.

\begin{lemma}
The number of ordered partitions of $N$ into $g$ parts, where every
part is from $[0,B]$, is at least $\binom{N-2g'+g-1}{g-g'-1}$, where
$g'=\left\lfloor\frac{N}{B}\right\rfloor$.
\end{lemma}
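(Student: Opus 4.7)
The plan is to exhibit an injective map $\Phi$ from the set $\mathcal{W}$ of weak compositions of $N-g'$ into $g-g'$ nonnegative parts into the set of ordered partitions of $N$ into $g$ parts from $[0,B]$; since $|\mathcal{W}| = \binom{(N-g')+(g-g')-1}{(g-g')-1} = \binom{N-2g'+g-1}{g-g'-1}$, this yields the desired lower bound. Given $(c_1,\ldots,c_{g-g'})\in\mathcal{W}$, the map $\Phi$ chooses positive integers $(k_1,\ldots,k_{g-g'})$ with $\sum_j k_j = g$, splits each $c_j$ canonically into $k_j$ pieces from $[0,B]$ whose sum is $c_j + k_j - 1$, and concatenates. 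The total number of pieces is then $g$ and the total sum is $\sum_j (c_j + k_j - 1) = (N-g') + g' = N$, as required.

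First I would establish feasibility of the $(k_j)$ allocation. The minimum valid $k_j^{\min}$ is $\max(1,\lceil (c_j-1)/(B-1)\rceil)$, and a short computation using the key inequality $(g'+1)B > N$ (which follows from $g' = \lfloor N/B\rfloor$) shows $\sum_j k_j^{\min} \leq g$, with equality only in the extremal case where all the mass $N-g'$ sits in a single $c_j$. The surplus $g - \sum_j k_j^{\min}$ is then allocated canonically, e.g., by adding it entirely to $k_1$. Each $c_j$ is split by greedy left-to-right packing at capacity $B$, so the pieces take the ``staircase'' shape $(B,B,\ldots,B,r,0,\ldots,0)$ where $r = (c_j + k_j - 1) \bmod B$. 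Both the allocation of $(k_j)$ and the greedy split are deterministic functions of the input, so $\Phi$ is well-defined.

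Injectivity is the main technical challenge. Because the allocation and splitting rules are deterministic, distinct weak compositions produce distinct concatenated outputs provided the partition of $(a_1,\ldots,a_g)$ into segments of lengths $k_1,k_2,\ldots,k_{g-g'}$ is recoverable from the output alone. The h
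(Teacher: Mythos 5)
Your proposal attempts the same counting bound but via an injection in the opposite direction from the paper: the paper exhibits a concrete subset of valid ordered partitions and shows it is in bijection with weak compositions (split $N$ into $g'+1$ blocks of length $B$, require the first and last part inside each block to be strictly positive so that the block boundaries are recoverable from the partition alone, then merge the $g'$ boundary pairs to reduce to a stars-and-bars count of $g-g'$ pieces summing to $N-2g'$). You instead try to build an explicit map $\Phi$ from weak compositions into valid partitions by choosing $k_j$'s and greedily packing each $c_j+k_j-1$ at capacity $B$. This is a genuinely different route, but the proof is not complete and has a real gap.

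The gap is the injectivity of $\Phi$, which you flag as ``the main technical challenge'' but do not finish: the text breaks off mid-sentence at ``The h''. Injectivity requires the segmentation $(k_1,\ldots,k_{g-g'})$ to be recoverable from the concatenated output $(a_1,\ldots,a_g)$ alone, and this is far from obvious with your construction. Each segment has the shape $(B,\ldots,B,r,0,\ldots,0)$, so the output is a concatenation of such staircases, but the boundary between a trailing run of zeros and a following run of $B$'s is not marked in any intrinsic way; worse, the lengths $k_j$ depend nonlocally on \emph{all} of $c_1,\ldots,c_{g-g'}$ through the surplus $g-\sum_j k_j^{\min}$ that you pour into $k_1$, so changing a late $c_j$ can silently change $k_1$ and shift every boundary. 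You would need to prove that despite this global coupling, distinct inputs always yield distinct outputs, and the ``add all surplus to $k_1$'' rule gives you no local decoding handle. The paper sidesteps this entirely by enforcing a \emph{local} positivity condition (first and last part of every block nonzero), which makes the block boundaries syntactically recoverable and the count immediate. A secondary, smaller issue: your feasibility claim $\sum_j k_j^{\min}\le g$ is only checked in the extremal case where one $c_j$ carries all the mass; the general case needs a short but genuine argument (it does go through, using $N< (g'+1)B$, but it is not the ``one-line extremal check'' you describe). As written, the proposal does not constitute a proof.
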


\begin{proof}
The number of ordered partitions of $N$ into $g$ parts, where there
are no restrictions on the sizes of the parts, is simply
$\binom{N+g-1}{g-1}$. To take the restrictions into the account, we
first split $N$ into blocks of length $B$ (except for the last block,
which might be shorter). This creates $g'+1$ blocks. Then, we
additionally split the blocks into smaller parts, which ensures that
all parts are from $[0,B]$. We restrict the smaller parts, so that the
first and the last smaller part in every block is strictly positive.
This ensures that given the resulting partition into parts, we can uniquely
reconstruct the blocks. Therefore, we only need to count the number of
ways we can split the blocks into such smaller parts, and by standard
reasoning this is at least $\binom{N-2g'+g-1}{g-g'-1}$.  This follows
by conceptually merging the last element in block $i$ with the first
element in block $i+1$, so that no further partitioning can happen
between them, and then partitioning the remaining set into $g-g'$
pieces. Every such partition corresponds to a distinct restricted
partition obtained by splitting between the merged elements, which creates
$g'$ additional blocks. 
\qed
\end{proof}

We are ready to combine all the ingredients. Setting
$N=k(n-1-\Delta)$, $g=n-1-\Delta$,
$g'=\left\lfloor\frac{k(n-1-\Delta)}{\Delta}\right\rfloor=\left\lfloor\frac{k(n-1)}{\Delta}\right\rfloor-k$
and substituting, the number of bits required by the encoding is:
$$ \lg \binom{N-2g'+g-1}{g-g'-1} > \lg \binom{(k+1)(n-2-\Delta-g')}{n-2-\Delta-g'} \enspace .$$
\noindent
Using the entropy function as a lower bound, this is at least
$(k+1)n'H(\frac{1}{k+1}) - \Theta(\log n')$, where $n' = n-2-\Delta-g'
\geq n(1-\frac{k}{\Delta})+\frac{k}{\Delta}+k - 2-\Delta$.  Thus, we
have the following theorem:

\begin{theorem}
\label{thm:top-k-lb}
For sufficiently large values of $n$, any data structure that encodes
range top-$k$ queries must occupy $(k+1)n'H(\frac{1}{k+1}) -
\Theta(\log n')$ bits of space, where $n' \ge
n(1-\frac{k}{\Delta})+\frac{k}{\Delta}+k - 2-\Delta$, and $\Delta \ge
1$ can be selected to be any positive integer.  If $k = o(n)$, then
$\Delta$ can be chosen such that $\Delta = \omega(k)$ and $\Delta =
o(n)$, yielding that the lower bound is $(k+1)n
H(\frac{1}{k+1})(1-o(1))$ bits.
\end{theorem}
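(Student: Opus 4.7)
The plan is to chain together all the lemmas built up in this subsection. By Lemma~\ref{lem:reconstruction-topk}, any encoding for range top-$k$ queries must distinguish all valid $(n{-}1)$-tuples $(\delta_1,\ldots,\delta_{n-1})$, so the encoding length is at least the base-2 logarithm of the number of such tuples. Chaining Lemma~\ref{lem:valid-tuple-topk} (which shows nonnegative lattice walks with steps $(1,a_i)$, $a_i\in[k-\lceil y_i/k\rceil,k]$, inject into valid tuples), then Lemma~\ref{lem:restricted-walks-topk} (restricting to $[k-\Delta,k]$-nonnegative walks of length $n-1-\Delta$), and then Lemma~\ref{lem:cycle-lemma-topk} (the cycle lemma), reduces the count to at least the number of $[k-\Delta,k]$-restricted returning walks of length $n-1-\Delta$, divided by $n-1-\Delta$. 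The division only costs an additive $\Theta(\log n)$ term in the bit bound.

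Next I would translate returning walks into restricted ordered partitions via the substitution $\ell_i = k-a_i$: a step $(1,a_i)$ with $a_i\in[k-\Delta,k]$ becomes a part $\ell_i\in[0,\Delta]$, and the returning condition $\sum_i a_i = 0$ becomes $\sum_i \ell_i = k(n-1-\Delta)$. Setting $N=k(n-1-\Delta)$, $g=n-1-\Delta$, and $B=\Delta$ in the partition lemma, and noting that $g' = \lfloor N/B\rfloor = \lfloor k(n-1)/\Delta\rfloor - k$, we obtain the lower bound $\binom{N-2g'+g-1}{g-g'-1}$ on the count of valid tuples.

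The algebraic step I expect to be the main bookkeeping obstacle is passing from this binomial to a central-style binomial of the form $\binom{(k+1)n'}{n'}$ with $n' = n-2-\Delta-g'$. A short computation shows $g-g'-1 = n'$ exactly, while $N-2g'+g-1 = (k+1)(n-1-\Delta)-2g'-1$, so the required inequality $(k+1)(n-1-\Delta)-2g'-1 \geq (k+1)n'$ collapses to $(k-1)g' + k \geq 0$, which holds for all $k\geq 1$ and $g'\geq 0$. Applying the standard entropy estimate $\lg\binom{(k+1)m}{m} \geq (k+1)m\,H(\tfrac{1}{k+1}) - \Theta(\log m)$ then yields the claimed $(k+1)n'\,H(\tfrac{1}{k+1}) - \Theta(\log n')$ lower bound on the encoding size.

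For the second part of the statement, I would plug the definition of $g'$ back in to get the explicit estimate $n' \geq n(1-\tfrac{k}{\Delta}) + \tfrac{k}{\Delta} + k - 2 - \Delta$ claimed in the theorem. To conclude the $(1-o(1))$ form under the hypothesis $k=o(n)$, it suffices to choose $\Delta$ growing strictly between $k$ and $n$; the canonical choice $\Delta = \sqrt{kn}$ makes both $k/\Delta = \sqrt{k/n} = o(1)$ and $\Delta/n = \sqrt{k/n} = o(1)$, so that $n' = n(1-o(1))$ and the bound becomes $(k+1)n\,H(\tfrac{1}{k+1})(1-o(1))$. The overall proof is thus a short concatenation of the lemmas plus the parameter tuning; the only nontrivial verification is the $(k-1)g'+k \geq 0$ inequality above.
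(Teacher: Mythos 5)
Your proposal is correct and follows essentially the same route as the paper: chain Lemmas~\ref{lem:reconstruction-topk}, \ref{lem:valid-tuple-topk}, \ref{lem:restricted-walks-topk}, and~\ref{lem:cycle-lemma-topk}, translate returning walks into ordered partitions via $\ell_i=k-a_i$, invoke the partition lemma with $N=k(n-1-\Delta)$, $g=n-1-\Delta$, $B=\Delta$, and then apply the binomial/entropy estimate and choose $\Delta$ between $\omega(k)$ and $o(n)$. The one place you go slightly beyond the paper's text is spelling out that the binomial inequality $\binom{N-2g'+g-1}{g-g'-1}\geq\binom{(k+1)n'}{n'}$ reduces to $(k-1)g'+k\geq0$, which the paper asserts without derivation; this is a useful clarification but not a different approach.
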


\section{\label{sec:datastructure}Data Structure for Top-\texorpdfstring{$k$}{k} Queries}

In this section we show how to use the encoding of
Section~\ref{sec:encoding-topk} to construct a data structure that
supports top-$k$ queries efficiently.

The high-level idea is to decompose the array into blocks, and
construct a new array by storing the $k$ largest elements in each
block.  Then, we build a naive structure over the new (short) array,
called the macro structure, and additionally store a small separate
structure for every block, called the micro structure. This is a
standard approach in succinct data structures, but as soon as we try
to apply it in the top-$k$ setting, quite a few difficulties
appear. The micro structures should be based on the encoding from
Section~\ref{sec:encoding-topk}, which in turn is based on encoding
how the $S_k(j)$'s change. But these changes can be, in some cases,
very non-local, and hence it is not obvious how the blocks should be
defined. This problem also occurs in, for example, encodings for
balanced parenthesis, where the so-called pioneers approach is
used~\cite{GRRR06}.  Here the situation is even more complex, and we
start with developing an appropriate decomposition through a series of
technical lemmas.  Then, using the decomposition, we construct the
macro structure, which allows us to answer any query spanning more
than one block, and the micro structure, which allows us to answer any
query fully inside a single block.

\subsection{Good Decompositions}

Consider the array $A$, and the structure $S_k(j)$ at each array index
$j \in [1,n]$.  Recall that the structure $S_k(j)$ is an array, where
each entry is an integer drawn from the range $[1,k]$.  For technical
reasons we define $S_k(0)$ to be an empty array.  See
Table~\ref{tab:twoseq} for an example of these definitions for $k =
2$.

\begin{table}
\centering
\caption{\label{tab:twoseq}Suppose $A = \{ 46, 31, 93, 16, 45, 77, 25,
  57, 26\}$.  We give the structures $S_k(j)$ for $A$ in the following
  table. The encoding for $A$ is: $1100110010001100101$.}
\begin{tabular}{|r||c|c|c|c|c|c|c|c|c|}
\hline 
   $i$ & 1 & 2 & 3 & 4 & 5 & 6 & 7 & 8 & 9 \\
\hline
   $A[i]$ & 46 & 31 & 93 & 16 & 45 & 77 & 25 & 57 & 26 \\
\hline
\hline
  $S_2(0,i)$ &  &   &   &   &   &   &   &   &    \\
\hline 
  $S_2(1,i)$ & {\bf 0} &   &   &   &   &   &   &   &    \\
\hline 
  $S_2(2,i)$ & {\bf 0} & {\bf 0} &   &   &   &   &   &   &    \\
\hline 
  $S_2(3,i)$ & {\bf 1} & {\bf 1} & {\bf 0} &   &   &   &   &   &    \\
\hline 
  $S_2(4,i)$ & {\bf 1} & {\bf 1} & {\bf 0} & {\bf 0} &   &   &   &   &    \\
\hline 
  $S_2(5,i)$ & {\bf 1} & {\bf 2} & {\bf 0} & {\bf 1} & {\bf 0} &   &   &   &    \\
\hline 
  $S_2(6,i)$ & {\bf 2} & {\bf 2} & {\bf 0} & {\bf 2} & {\bf 1} & {\bf 0} &   &   &    \\
\hline 
  $S_2(7,i)$ & {\bf 2} & {\bf 2} & {\bf 0} & {\bf 2} & {\bf 1} & {\bf 0} & {\bf 0} &   &    \\
\hline 
  $S_2(8,i)$ & {\bf 2} & {\bf 2} & {\bf 0} & {\bf 2} & {\bf 2} & {\bf 0} & {\bf 1} & {\bf 0} &    \\
\hline 
  $S_2(9,i)$ & {\bf 2} & {\bf 2} & {\bf 0} & {\bf 2} & {\bf 2} & {\bf 0} & {\bf 2} & {\bf 0} & {\bf 0}  \\ 
\hline
\end{tabular}
\end{table}

Let $\opset(i) = \{a_1, ..., a_z\}$ be the set of all indices such
that $S_k(i-1,a_\ell) \neq S_k(i,a_\ell)$ for $1 \le \ell \le z$; this
set will include the index $i$.  Furthermore, define $\opset(i_1,i_2)
= \cup_{i = i_1}^{i_2} \opset(i)$.  In the example, $\opset(5) =
\{2,4,5\}$, and $\opset(5,6) = \{1,2,4,5,6\}$.  Note that the encoding
described in Section~\ref{sec:encoding-topk} is such that $\delta_i =
|\opset(i) \setminus\{i\}|$ for $i \in [1,n]$.

Conceptually, we divide the range $[1,n]$ into disjoint
\emph{even-blocks} of length $B$: $[1,B],[B+1, 2B], ...$, for some
parameter $B \ge 1$ that we will fix later, and without loss of
generality, assume that $B$ divides $n$.  We use the notation
$\mathcal{B}_i$ to denote the range $[Bi + 1, B(i+1)]$ for $i \in [1,
  \frac{n}{B}]$.

Our goal is to decompose the array into a collection of disjoint
\emph{blocks}.  Each block will have the property that it consists of
a range of at most $B$ contiguous array elements, and will be also
contained within at most one even-block.  We refer to blocks that span
a single array element as \emph{singletons}.

Suppose our decomposition $\mathcal{D}$ consists of $h$ blocks,
$\mathcal{G}_1, ... \mathcal{G}_h$, and that block $\mathcal{G}_i$
consists of the contiguous range $[g(i),g(i+1)-1]$ in $A$, where $1
\le i \le h$, $g(1) = 1$, and $g(h+1) = n+1$.  We call $\mathcal{D}$
\emph{good} if:

\begin{enumerate}[label=\bfseries D\arabic*]

\item \label{en:size-const} Size Constraint: the total number of
  blocks is $h = \Oh(\frac{k^2 n }{ B})$.

\item \label{en:weight-const} Weight Constraint: Consider the changes
  in the structures $S_k(g(i)), S_k(g(i)+1), \ldots$ that occur as we
  scan the indices of an arbitrary block $\mathcal{G}_i$, from left to
  right.  A good decomposition has that the number of changes (i.e.,
  increment operations) occurring in the structures as a result of the
  elements in a block is relatively small, if the block is not a
  singleton.  Formally, we have that $\sum_{j = g(i)}^{g(i+1)-1} |
  \opset(j) | \le B$ for $1 \le i \le h$ if $\mathcal{G}_i$ is not a
  singleton.  Note that this implies that the bit string
  $0^{\delta_{g(i)}}10^{\delta_{g(i)+1}}1\ldots0^{\delta_{g(i+1)-1}}1$
  has length at most $B$.

\item \label{en:window-const} Window Constraint: Consider the changes
  in the structures that occur as we process each individual block.
  The indices of the structures that change are located in a
  relatively small range, if the block is not a singleton.  Formally,
  suppose that $\mathcal{G}_i \subseteq \mathcal{B}_t$ for some $t \in
  [1,\frac{n}{B}]$.  Then we have that $(\opset(g(i),g(i+1)-1) \setminus
  \mathcal{B}_t) \subseteq \mathcal{B}_{w}$ for some $w \in [1,t-1]$,
  if the block $\mathcal{G}_i$ is not a singleton.  We call
  $\mathcal{B}_w$ the \emph{window} of block $\mathcal{G}_i$.

\end{enumerate}

The remainder of this section proves that we can construct a good
decomposition.

\begin{lemma}\label{lem:decomposition}
There is a good decomposition $\mathcal{D}$ of the array $A$.
\end{lemma}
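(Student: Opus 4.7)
The plan is to build $\mathcal{D}$ greedily, one even-block at a time, and then verify D1--D3 in turn. Because every block must lie inside a single even-block, I will process the indices of each $\mathcal{B}_t$ in order, maintaining one ``current'' block together with its tentative window $w$. For each $j\in\mathcal{B}_t$ I compute $X_j=\opset(j)\setminus\mathcal{B}_t$; if $|\opset(j)|>B/2$ or $X_j$ intersects two different even-blocks I close the current block and emit $\{j\}$ as a singleton, while otherwise $X_j\subseteq\mathcal{B}_{w'}$ for a unique (possibly empty) $w'<t$ and I append $j$ to the current block, closing it and starting a new one whenever that would violate the weight bound $\sum|\opset(j)|\le B$ of D2 or disagree with the already-chosen window $w$. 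Conditions D2 and D3 then hold by construction.

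All the remaining work is bounding the total number of blocks $h$. Inside a single $\mathcal{B}_t$ the blocks split into three kinds: (i) blocks closed because appending the next index would exceed the weight budget, (ii) blocks closed because appending the next index would introduce a second window, and (iii) singletons. The global identity
\[
\sum_{j=1}^{n}|\opset(j)| \;=\; n+\sum_{j=1}^{n-1}\delta_{j}\;\le\;(k+1)n
\]
immediately gives a bound of $O(kn/B)$ on the total number of kind-(i) blocks (each has weight at least $B/2$ by the preprocessing step that made large-$|\opset(j)|$ indices singletons) and on the number of singletons created by the $|\opset(j)|>B/2$ rule. Averaged over the $n/B$ even-blocks this is $O(k)$ per $\mathcal{B}_t$, so to obtain $h=O(k^{2}n/B)$ it suffices to bound the window-closed blocks of kind (ii) and the ``cross'' singletons (those with $X_j$ straddling two even-blocks) inside a fixed $\mathcal{B}_t$ by $O(k^{2})$.

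That last bound is the main obstacle, and I expect the argument to live entirely in the staircase structure of $S_k$: every active index carries a counter in $\{0,\ldots,k-1\}$, absorbs at most $k$ increments before turning inactive, and the indices targeted by $\opset(j)\setminus\{j\}$ are precisely the $\delta_{j-1}$ smallest active elements in $S_k(j-1)$. From this I plan to extract two facts. First, any earlier even-block $\mathcal{B}_w$ holds at most $B$ indices, each incrementable at most $k$ times, hence receives at most $kB$ increments across the entire processing of $\mathcal{B}_t$; since every block that designates $\mathcal{B}_w$ as its window consumes at least one such increment (up to weight accounting it consumes $\Omega(B)$ or is a short block adjacent to a window change), each $\mathcal{B}_w$ can serve as window to only $O(k)$ blocks of $\mathcal{B}_t$. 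Second, because the targets of $\opset(j)$ are always the $A$-smallest active elements, the set of even-blocks touched by $X_j$ evolves monotonically in the staircase and only $O(k)$ distinct even-blocks can ever appear as windows during $\mathcal{B}_t$ -- one per counter class, since each such class is eventually exhausted by the incoming elements of $\mathcal{B}_t$. A symmetric accounting charges every cross singleton to a transition between two such windows. Multiplying the two $O(k)$ factors yields $O(k^{2})$ blocks per even-block, so summing over all $n/B$ even-blocks gives $h=O(k^{2}n/B)$ and establishes D1.
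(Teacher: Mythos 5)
Your greedy construction is sound for \ref{en:weight-const} and \ref{en:window-const}, and your accounting via $\sum_j|\opset(j)|\le(k+1)n$ correctly disposes of the weight-closed blocks and the fat singletons. The gap is in the charging scheme you use for the window-closed blocks and cross singletons. You propose to show that \emph{each} $\mathcal{B}_t$ produces only $O(k^2)$ such blocks by arguing (a) at most $O(k)$ distinct even-blocks appear as windows while $\mathcal{B}_t$ is scanned, and (b) each window serves at most $O(k)$ blocks of $\mathcal{B}_t$. Claim (a) is not justified by the ``one per counter class'' idea and, more importantly, it is false. Consider $t-1$ earlier even-blocks each containing exactly one surviving active element, with strictly decreasing $A$-values $v_1 > v_2 > \cdots > v_{t-1}$ as positions increase, and let the entries of $\mathcal{B}_t$ be an increasing sequence interleaving these values. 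Then each new position of $\mathcal{B}_t$ increments a superset of the previous targets, the rightmost active even-block changes roughly every $k$ steps (as each single active element absorbs $k$ increments and dies), and the number of distinct windows seen inside $\mathcal{B}_t$ grows like $\min(B,t)/k$, not $O(k)$. So the per-$\mathcal{B}_t$ bound of $O(k^2)$ does not hold, and the multiplication of your two $O(k)$ factors collapses.

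The paper avoids this by charging in the other direction: each split at position $x_0$ is charged to the \emph{rightmost} earlier even-block $\mathcal{B}_w$ meeting $\opset(x_0)\setminus\mathcal{B}_t$, and one shows each $\mathcal{B}_w$ is charged at most $O(k^2)$ times \emph{globally} (over all $t$), not per $\mathcal{B}_t$. The reason this works is a monotonicity property you did not isolate: at a split charged to $\mathcal{B}_w$, some $a'\in\opset(x_0)$ lies in an even-block strictly to the left of $\mathcal{B}_w$, and since $a'$ is active and is among the targeted smallest elements, \emph{all but at most $k-1$} active positions in $\mathcal{B}_w$ must also be targeted (only positions with value exceeding that of the maximal targeted element in $\mathcal{B}_w$ escape, and there are at most $k-1$ of these because $a'$ is active with $a'<$ every index of $\mathcal{B}_w$). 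Hence after $k$ charges all but the $k-1$ largest actives in $\mathcal{B}_w$ are dead, and at most $k(k-1)$ further charges can land before $\mathcal{B}_w$ is entirely inactive. Summing over the $n/B$ choices of $w$ gives the $O(k^2n/B)$ bound. Your per-window increment-budget idea (``$\mathcal{B}_w$ receives at most $kB$ increments'') is in the right spirit but too weak by a factor of $B/k$, precisely because it ignores that a single split forces \emph{many} positions of $\mathcal{B}_w$ to be hit at once. You would need to establish the ``all but $k-1$ are incremented'' claim and then charge per $\mathcal{B}_w$ rather than per $\mathcal{B}_t$.
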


\begin{proof}
We describe a procedure for computing a decomposition satisfying these
conditions.  For each position $i \in [1,n]$, we define the
\emph{weight} $w_i = |\opset(i)|$.  The weight of a range in $A$ is
equal to the sum of the weights of the positions it spans. Positions
with weight larger than $B$ are called \emph{fat}, and will be
singletons in our decomposition.  Since each $w_i$ corresponds to
$w_i$ zero bits in the encoding plus one, and there are at most $kn$
zero bits, the number of fat elements is at most $\Oh(\frac{kn}{B})$.

Consider the remaining non-fat elements.  We combine these non-fat
elements into $\Oh(\frac{kn}{B})$ blocks such that the weights of the ranges
is at most $B$.  This can be done by iteratively merging pairs of
blocks (initially blocks are just individual non-fat elements), until
the sum of the weight of any two adjacent blocks exceeds $B$.  When
this happens, every other block will have weight at least $\frac{B}{2}$, and
by the argument above there can be at most $\Oh(\frac{kn}{B})$ such blocks.
Furthermore, we subdivide these blocks along the boundaries of
even-blocks, introducing at most $\Oh(\frac{n}{B})$ additional blocks.  

We refer to the above decomposition as the \emph{initial
  decomposition}.  The initial decomposition satisfies
conditions~\ref{en:size-const} (in fact it has $\Oh(\frac{kn}{B})$ blocks
rather than $\Oh(\frac{k^2n}{B})$), and~\ref{en:weight-const}, but not
necessarily~\ref{en:window-const}.  Thus, we must further refine the
blocks in order to ensure to create a good decomposition.  We do this by
splitting them using an iterative procedure that we now describe.

For each block $\mathcal{G}_i \subseteq \mathcal{B}_t$ in the
initial decomposition, we scan it from left to right, calling the
current position $x_0$.  We will split it into a (potentially large)
number of new blocks.  At each step, there are two cases depending on
whether the set $\opset(g(i), x_0) \setminus \mathcal{B}_t$ is
contained within a single even-block.  
\begin{enumerate}

\item If it is, then we extend the current block which begins at
  position $g(i)$ by adding position $x_0$ to it.

\item If not, then we split the current block between positions $x_0 -
  1$ and $x_0$, i.e., set $g(i+1) = x_0$.  Furthermore, when this
  occurs we make position $x_0$ a singleton block. We then recursively
  apply the same procedure to the remaining unscanned part of the
  block adjusting the parameters appropriately.  Thus, we have
  introduced two additional blocks.
\end{enumerate}

Such a refinement clearly has the desired window property.  However,
the difficulty is arguing that the second case only occurs $\Oh(k^2
\frac{n}{B})$ times.  To show this, we use a charging argument in which each
split is charged to the rightmost even-block $\mathcal{B}_{w}$
containing a position in $\opset(x_0) \setminus \mathcal{B}_t$.  We
will bound the number of times $\mathcal{B}_w$ can be charged for a
split by $\Oh(k^2)$.

We say a position is $y$-active if it is active in structure $S_k(y)$.
Consider the $(x_0 - 1)$-active elements immediately before a split
occurs.  Consider the position $a \in \mathcal{B}_w$ such that $a \in
\opset(x_0)$ and $A[a]$ is maximum.  We have that $S_k(x_0-1,a) < k$
since $a$ is, by definition, $(x_0-1)$-active.  Moreover, since a
split occurred, there must be some block $\mathcal{B}_{w'}$ where $w'
< w$ containing a position $a' \in \mathcal{B}_{w'}$ such that $a' \in
\opset(x_0)$.  Since $a'$ is also $(x_0 -1)$-active this implies that
there are at most $k-1$ $(x_0-1)$-active positions contained in
$\mathcal{B}_w$, whose corresponding elements have values larger than
$A[a]$.  Thus, when a split occurs, all but at most $k-1$ of the
$(x_0-1)$-active locations contained in $\mathcal{B}_w$ are
incremented in $S_k(x_0)$.  Furthermore, any location not incremented
must be among the $k-1$ largest values in $A[Bw + 1, B(w+1)]$.  Thus,
after $k$ split operations, all but the $k-1$ largest active locations
become inactive.  Since each split increments at least one location in
$\mathcal{B}_w$ at most $k(k-1)$ additional splits occur before all
elements in $\mathcal{B}_w$ become inactive.  Overall, at most $k +
k(k-1) = \Oh(k^2)$ splits can occur before all elements in
$\mathcal{B}_w$ become inactive.

Since there are $\frac{n}{B}$ even-blocks, we have that the total number of
blocks created by splits (or otherwise) is $\Oh(\frac{k^2 n}{B})$, completing
the proof. \qed
\end{proof}

\subsection{Navigating the Encoding}

Before discussing the data structures we store, we require an
additional result, called an \emph{indexable dictionary},
by Raman, Raman, and Rao~\cite{RRR07}:

\begin{lemma}[\cite{RRR07}]\label{lem:rrr-id} Let $\mathcal{V}$ be a
  bit vector of length $n$ bits, containing $m$ one bits.  In the
  word-RAM model with word size $\Theta(\lg n)$ bits, there is a data
  structure of size $\lg \binom{n}{m} + \Oh(m) + \Oh(\lg \lg n) \le n
  H(\frac{m}{n}) + \Oh(m) + \Oh(\lg \lg n))$ bits that supports the
  following operations in $\Oh(1)$ time, for any $i \in [1,n]$:
\begin{enumerate}
\item $\accessop(\mathcal{V}, i)$: return the bit at index $i$ in $\mathcal{V}$.
\item $\rankop_1(\mathcal{V}, i)$: return the number of bits with
  value $1$ in $\mathcal{V}[1..i]$, iff $\accessop(\mathcal{V},i) =
  1$.  If $\accessop(\mathcal{V},i) = 0$, then a flag is returned
  indicating that the operation cannot be supported.
\item $\selop_1(\mathcal{V}, i)$: return the index of the $i$-th
  bit with value $1$.
\end{enumerate}
\end{lemma}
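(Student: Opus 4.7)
The plan is to represent the set $S = \{p_1 < p_2 < \cdots < p_m\}$ of 1-bit positions in $\mathcal{V}$ by an enumerative (combinadic) encoding and then bolt on a lightweight two-level sampling structure for navigation.

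First, I would fix a bijection between $m$-subsets of $[1,n]$ and integers in $[0,\binom{n}{m})$ and store the index of $S$ in $\lceil \lg \binom{n}{m} \rceil$ bits. This is the only instance-specific storage of the set itself, so all remaining auxiliary data must contribute only $O(m) + O(\lg \lg n)$ bits. A universal lookup table of size $o(\sqrt{n})$ bits, shared across instances, lets us decode any $\Theta(\lg n)$-bit window of the combinadic in $O(1)$ time by precomputing, for every short bit pattern, the positions of its 1-bits and the corresponding rank contributions.

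Second, to support $\selop_1$ I would sample the 1-bits at two granularities: primary samples at every $t_1 = \Theta(\lg^2 n)$-th 1-bit with absolute positions stored in $\lg n$ bits each, and inside each primary block, secondary samples at every $t_2 = \Theta(\lg n)$-th 1-bit stored relative to the enclosing primary sample in $O(\lg \lg n)$ bits each. The primary layer costs $O((m/t_1) \lg n) = O(m/\lg n)$ bits and the secondary layer costs $O((m/t_2) \lg \lg n) = O(m)$ bits. A query $\selop_1(\mathcal{V},i)$ jumps to the correct secondary sample in $O(1)$, and the remaining $\Theta(\lg n)$ 1-bits are resolved by decoding one window of the combinadic via the universal table.

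Third, for $\accessop(\mathcal{V},i)$ and the restricted $\rankop_1(\mathcal{V},i)$ I would layer a constant-time predecessor structure on top of the samples. Because primary samples are spaced $\Theta(\lg^2 n)$ apart, the $O(n/\lg^2 n)$ primary positions admit word-level broadword predecessor search in $O(1)$; within a primary block the $\Theta(\lg n)$ secondary samples pack into $O(1)$ machine words, where predecessor search is again $O(1)$. Having identified the subgroup containing $i$, we decode its $\Theta(\lg n)$-bit combinadic window to test whether $i$ appears; if so, its rank is obtained by adding its local offset to the count recorded at the enclosing samples, otherwise the access simply returns $0$ (and the rank query returns the sentinel flag).

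The delicate obstacle is driving the redundancy down to $O(m) + O(\lg \lg n)$ rather than the $o(n)$ that a generic succinct index would produce. Every auxiliary bit must be charged either to a 1-bit, collectively yielding $O(m)$, to the shared universal table, charged to zero, or to the $O(1)$ top-level pointers absorbing the residual $O(\lg \lg n)$. Verifying that constant-time combinadic decoding can in fact be achieved with a sublinear shared table, and that the two-level sampling meets the promised redundancy exactly, are the two places where the argument needs the most care.
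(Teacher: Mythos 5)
The paper does not prove this lemma; it is invoked as a black-box citation of Raman, Raman, and Rao~\cite{RRR07} (their \emph{indexable dictionary}). Your blind attempt therefore cannot be compared against a proof in the paper, but it also does not reconstruct a correct proof, for two concrete reasons. The central device you propose---storing the rank of $S$ in the combinatorial number system and then ``decoding any $\Theta(\lg n)$-bit window of the combinadic in $O(1)$ time''---cannot work. The combinadic is a single integer $N=\sum_{i=1}^m\binom{p_i-1}{i}$ with $\Theta\bigl(\lg\binom{n}{m}\bigr)$ bits, and its bit positions carry no local information about $S$: changing one element perturbs $N$ by a binomial coefficient of comparable length, so essentially every window of $N$ changes. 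Decoding even a single $p_i$ requires arithmetic on the full integer, which is $\omega(1)$ word operations whenever $\lg\binom{n}{m}=\omega(\lg n)$. The shared lookup table only helps if the encoding is a concatenation of short, \emph{independently} decodable pieces; a blocked enumerative encoding does have that property, but it forces you to also store per-block counts and pointers, which is exactly the $\Theta(n\lg\lg n/\lg n)$ redundancy of the fully-indexable dictionary, not the $O(m)+O(\lg\lg n)$ claimed here. The actual construction splits each $p_i$ Elias--Fano style into a high part (stored in unary as a balanced $O(m)$-bit vector) and a low part of $\approx\lg(n/m)$ bits (stored verbatim), for a total of $m\lg(n/m)+O(m)\le\lg\binom{n}{m}+O(m)$ bits; $\selop_1$ then reduces to select on the $O(m)$-bit unary vector, and your two-level sampling scaffold becomes unnecessary.

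The second gap is in $\accessop$ and the restricted $\rankop_1$. You assert ``word-level broadword predecessor search in $O(1)$'' over $O(n/\lg^2 n)$ primary samples of $\lg n$ bits each; that is $O(n/\lg n)$ bits, far more than a constant number of words, and constant-time predecessor over such a set is not available by broadword tricks alone (or at all, deterministically, without more structure). What makes the indexable dictionary work is that $\rankop_1$ is only promised on positions that are $1$: RRR07 exploit this by storing a minimal perfect hash function over the $m$ ones ($O(m)$ bits, $O(1)$ evaluation). Given a queried position $p$, one hashes $p$, confirms membership by comparing the stored low part, and reads off its rank from the hash bucket; if the check fails one reports $0$ / the sentinel. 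This is precisely why the restriction on $\rankop_1$ appears in the statement---unrestricted rank would require the larger FID redundancy---and it is the ingredient your proposal is missing.
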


We apply Theorem~\ref{lem:decomposition} to partition $A$ into
$\Oh(\frac{nk^2}{B})$ blocks $\mathcal{G}_{1},\mathcal{G}_{2},\ldots$,
where $B$ is some parameter that will be fixed later on. We then
construct the following indexes:

\begin{enumerate}

\item \emph{Block index:} This is the rank/select data structure of
  Lemma~\ref{lem:rrr} constructed on a bit vector of length $n$
  marking the block boundaries.  This allows us to find the start of
  an arbitrary block in constant time. This bit vector occupies:
$$\lg\binom{n}{\frac{nk^2}{B}} + \Oh\left(\frac{n \lg \lg n }{ \lg n}\right) \le nH\left(\frac{k^2}{B}\right) +
  \Oh\left(\frac{n \lg \lg n }{ \lg n}\right)$$
\noindent 
bits of space, by Lemma~\ref{lem:rrr}.

\item \emph{Encoding index:} Consider the bit vector storing the
  encoding $E$ (described in Section~\ref{sec:encoding-topk}) on
  $A$. For each zero bit in the encoding $E$, we say that bit is
  \emph{associated} with the one bit immediately to its right.  That
  is, the zero bit at position $i$ is associated with the one bit in
  position $\selop_1(E,\rankop_1(E,i) + 1)$.  Since the $j$-th one bit
  in the encoding is representing element $A[j]$, each zero bit
  associated with this one bit can also be said to be associated with
  $A[j]$. Suppose $A[j]$ is part of a block $\mathcal{G}_i$ which is
  contained in even-block $\mathcal{B}_t$ and has a window contained
  in even-block $\mathcal{B}_w$.  The $0$ bits associated with
  position $A[j]$ come in exactly two flavors:

  \begin{enumerate}
    \item Internal increment: if the $0$ bit corresponds an increment
      operation in $S_k(j)$ that occurs inside even-block ${B}_t$
    \item Window increment: if the $0$ bit corresponds an increment
      operation in $S_k(j)$ that occurs inside the window ${B}_w$
  \end{enumerate}
  
  Suppose that for each $j \in [1,n]$ we create two bit vectors
  $E_{\text{INT}}(j)$ and $E_{\text{WIN}}(j)$.  These two bit vectors
  will be of the form $0^\alpha 1$ and $0^\beta 1$, respectively,
  where $\alpha_j$ is the number of internal increments associated
  with position $j$ and $\beta_j$ is the number of window increments
  associated with position $j$.  Note that $\delta_j = \alpha_j +
  \beta_j$. Let $E_{\text{INT}}$ and $E_{\text{WIN}}$ be the
  concatenation of the $E_{\text{INT}}(j)$ and $E_{\text{WIN}}(j)$ bit
  vectors, respectively.  Both of these bit vectors together have $2n$
  one bits, and $kn$ zero bits.  Thus, storing $E_{\text{INT}}$ and
  $E_{\text{WIN}}$ in the smaller of the two representations discussed
  (either Lemma~\ref{lem:rrr} or \ref{lem:rrr-id}) will occupy $(k+2)n
  H(\frac{2}{k+2}) + \Oh(\min\{\frac{nk \lg\lg(nk) }{\lg (nk)},n\})$ bits in
  total.  Note that we cannot perform rank operations on arbitrary
  positions in these bit vectors using the bound just stated, though
  we can perform arbitrary select operations.


\end{enumerate}

\begin{lemma}\label{lem:subarray-enc}
Using the above data structures we can recover the length $j'$ suffix
of the structure $S_k(Bi + j)$ for any $i \in [1,\frac{n}{B}]$, $j \in
[1,B-1]$ and $j' \in [1,j]$ in $\Oh(B^2)$ time.
\end{lemma}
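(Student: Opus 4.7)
The length-$j'$ suffix spans positions $[Bi+j-j'+1, Bi+j]$, all of which lie inside the even-block $\mathcal{B}_i$ since $j' \le j \le B-1$. At step $Bi$ no position of $\mathcal{B}_i$ has been created yet, so the restriction of $S_k(Bi)$ to $\mathcal{B}_i$ is empty. The plan is to rebuild the restriction of $S_k(Bi+j)$ to $\mathcal{B}_i$ incrementally by simulating the steps $p = Bi+1, Bi+2, \ldots, Bi+j$ in order, maintaining a local copy $S^*$ of this restriction, and then reporting its last $j'$ entries.

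The key enabling observation will be that the update at each step can be performed \emph{locally within $\mathcal{B}_i$}: among the $\delta_p$ globally smallest active values incremented at step $p$, the $\alpha_p$ that lie in $\mathcal{B}_i$ are precisely the $\alpha_p$ smallest active values in $\mathcal{B}_i$. This follows from a one-line swap argument: if some active $q \in \mathcal{B}_i$ whose value is smaller than some included $q' \in \mathcal{B}_i$ were excluded from the globally smallest-$\delta_p$ set, then replacing $q'$ by $q$ would yield a strictly smaller selection, contradicting that set's definition. Consequently, nothing about the window or about positions outside $\mathcal{B}_i$ is needed to update $S^*$---only the integer $\alpha_p$, which I will extract in $O(1)$ time via two select queries, namely $\alpha_p = \selop_1(E_{\text{INT}}, p) - \selop_1(E_{\text{INT}}, p-1) - 1$, using the structure of $E_{\text{INT}}$ as a concatenation of unary codes $0^{\alpha_j}1$.

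The simulation then proceeds, for each $p = Bi+1, \ldots, Bi+j$, by (i) recovering $\alpha_p$ as above; (ii) invoking Lemma~\ref{lem:active} on the current $S^*$ to produce the total order of its active entries, and incrementing the counters of the $\alpha_p$ smallest among them; (iii) appending position $p$ to $S^*$ with value $0$. After the final iteration, the entries at positions $[Bi+j-j'+1, Bi+j]$ of $S^*$ are the requested suffix of $S_k(Bi+j)$, with correctness following by induction on $p$ from the swap observation together with the fact that $S^*$ is initially empty.

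For the running time, $S^*$ contains at most $B$ entries throughout, so each invocation of Lemma~\ref{lem:active} runs in $O(B)$ time, and the $\alpha_p$ increments cost $O(\alpha_p)$ using the order just produced. Summing over the $j \le B$ iterations gives $O(B^2)$ for the scans plus $O(\sum_p \alpha_p) = O(kB) = O(B^2)$ for the increments (each position in $\mathcal{B}_i$ is incremented at most $k$ times, and we may assume $k \le B$ when choosing $B$), which yields the claimed $O(B^2)$ bound. The only place where real care is required is the swap observation that justifies purely local updating; once it is in hand, the algorithm and its analysis are routine.
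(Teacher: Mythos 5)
Your approach is essentially the one the paper takes: use $E_{\text{INT}}$ to extract $\alpha_p$ via two $\selop_1$ queries, observe that the increments restricted to $\mathcal{B}_i$ are exactly the $\alpha_p$ smallest active values in $\mathcal{B}_i$ (so the simulation can be purely local), and replay positions $Bi+1,\ldots,Bi+j$ to build the local structure $S^*$.

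However, there is a genuine error in the running-time analysis. You claim that ``each invocation of Lemma~\ref{lem:active} runs in $\Oh(B)$ time,'' but rebuilding the total order of active elements in $S^*$ from scratch is not $\Oh(|S^*|)$. Lemma~\ref{lem:active} scans the fragment and, for each active position $\ell$, inserts it into the total order at rank $S^*[\ell]+1$ from the top; with a linked list this insertion costs $\Theta(S^*[\ell]+1)=\Theta(k)$ in the worst case, so a single rebuild costs $\Theta(kB)$, not $\Oh(B)$, and re-invoking Lemma~\ref{lem:active} at every one of the $\Oh(B)$ iterations yields $\Oh(kB^2)$ rather than the claimed $\Oh(B^2)$. (Even with a balanced search tree each rebuild is $\Omega(B\log B)$, still exceeding $\Oh(B)$.) The paper avoids this by \emph{maintaining} the total order across iterations rather than rebuilding it: when processing position $\ell$ with $\alpha_\ell$ internal increments, insert $\ell$ into the existing linked list as the $(\alpha_\ell+1)$-th smallest element, increment the counters of the $\alpha_\ell$ smaller elements (deleting any that reach $k$), all in $\Oh(\alpha_\ell)=\Oh(B)$ time, giving $\Oh(B^2)$ overall. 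Your proof should adopt this incremental maintenance rather than re-running Lemma~\ref{lem:active} each iteration; the rest of the argument (the swap observation, the use of $E_{\text{INT}}$, the $k\le B$ assumption) is sound and matches the paper.
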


\begin{proof}
Let $m_\alpha = \selop_1(E_{\text{INT}},\alpha)$, and consider the
range of $E_{\text{INT}}$ between $[m_{Bi-1}+1,m_{Bi+j}]$.  This range
contains all the one bits in $E_\text{INT}$ associated with elements
$A[Bi], A[Bi + 1], \ldots, A[Bi+j]$, and also the zero bits associated
with these one bits that are internal increments.  Furthermore the
length of this range in $E_\text{INT}$ is at most $\Oh(B^2)$, since
even in the case where every position in the even-block is a
singleton, the number of internal increments for each of these is
upper bounded by the length of the even-block, $B$.  Thus, to recover
the fragment of the structure $S_k(Bi+j)$, we construct an array of
length $j +1$ in which each index stores a $\lceil \lg (k+1) \rceil$
bit number.  We process the internal increments associated with the
elements $A[Bi]$, $A[Bi +1]$, ..., etc. in order, calling the current
position $\ell$, where $Bi \le \ell \le Bi +j$. We maintain the total
ordering over all currently active elements in length $\ell - Bi +1$
suffix of $S_k(\ell)$ as follows.  Suppose position $\ell$ is
associated with $x$ internal increments (we can determine this by
comparing $m_{\ell}$ and $m_{\ell-1}$ in $\Oh(1)$ time using the
encoding index).  We insert position $\ell$ into the total order as
the $(x +1)$-th smallest element, set its counter value to $0$, and
increment the counters associated with the $x$ smallest elements.  If
an incremented counter exceeds $k -1$, then we remove it from the
total order.  Maintaining the total ordering as a linked list is
sufficient to process the fragment in $\Oh(x)=\Oh(B)$ time per
position $\ell$.  Since there are at most $\Oh(B)$ positions, the
total time is $\Oh(B^2)$. \qed
\end{proof}

\begin{lemma}\label{lem:subarray-query}
Given a subarray $A[x_1..x_2]$ that is contained within a block, we
can return a list $L$ such that $L[p]$ stores the position of the
$p$-th largest element in $A[x_1..x_2]$ in $\Oh(B^2)$ time.
\end{lemma}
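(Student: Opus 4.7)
The plan is to reduce the query to recovering the relevant suffix of $S_k(x_2)$ and then applying the reconstruction procedure of Lemma~\ref{lem:active} locally. Since $A[x_1..x_2]$ lies inside a single block $\mathcal{G}_i$, it lies inside a single even-block $\mathcal{B}_t$; write $x_2 = Bt + j$ and $j' = x_2 - x_1 + 1 \le B$. By Lemma~\ref{lem:subarray-enc}, I can in $\Oh(B^2)$ time recover the length-$j'$ suffix of $S_k(x_2)$, i.e.\ the values $S_k(x_2,\ell)$ for $\ell \in [x_1, x_2]$.

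Next I would argue that the top-$k$ of $A[x_1..x_2]$ lie entirely among the positions $\ell \in [x_1, x_2]$ that are \emph{active} in $S_k(x_2)$. Indeed, if $\ell$ is inactive then $S_k(x_2,\ell) = k$ means that at least $k$ positions in $(\ell, x_2]$ have strictly larger values than $A[\ell]$, and all these positions lie inside $[x_1, x_2]$, so $\ell$ cannot belong to the top-$k$ of the subarray. In particular, we only need to recover the value-ordering of the active positions in $[x_1, x_2]$, and then take its top $k$ elements.

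To do this, I would run exactly the procedure from Lemma~\ref{lem:active} but restricted to the recovered suffix: scan $\ell$ from $x_2$ down to $x_1$, maintaining a linked list representing the value-order of the active positions seen so far; whenever $\ell$ is active, insert it into the list so that there are exactly $S_k(x_2,\ell)$ already-inserted positions greater than it. The key observation used in Lemma~\ref{lem:active}---that an inactive element cannot dominate an active one in $G_{x_2}$---guarantees that for every active $\ell \in [x_1,x_2]$, all of its dominators lie in $(\ell, x_2]$ and are themselves active, hence have already been inserted, so the insertion rank $S_k(x_2,\ell)$ correctly places $\ell$ in the value-order. Reading off the top $k$ entries of the resulting list yields $L$ in value-sorted order.

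The running time is $\Oh(B^2)$ for recovering the suffix via Lemma~\ref{lem:subarray-enc} plus $\Oh(B)$ for the right-to-left scan with linked-list insertions, giving $\Oh(B^2)$ overall. The main obstacle I expect is precisely the correctness step above: one must be careful that applying Lemma~\ref{lem:active}'s procedure to only the suffix corresponding to $[x_1, x_2]$ (rather than the whole $S_k(x_2)$) still produces the correct value-ordering of the active positions in that range, and it is exactly the ``inactive-cannot-dominate-active'' property, together with the fact that all dominators of positions in $[x_1, x_2]$ lie to their right and hence inside $[x_1, x_2]$, that makes this localized reconstruction valid.
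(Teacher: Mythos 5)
Your proposal is correct and follows the same route as the paper's proof: recover the length-$(x_2-x_1+1)$ suffix of $S_k(x_2)$ via Lemma~\ref{lem:subarray-enc}, then apply the right-to-left reconstruction of Lemma~\ref{lem:active} with a linked list. The extra care you take to justify that the localized reconstruction is valid (inactive positions cannot be in the top-$k$ of the subrange, and all dominators of positions in $[x_1,x_2]$ already lie in $[x_1,x_2]$) is sound and is implicitly what the paper relies on when it invokes Lemma~\ref{lem:active}.
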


\begin{proof}
Using Lemma~\ref{lem:subarray-enc} we can build the length $\ell =
x_2-x_1 +1$ suffix of the structure $S_k[x_2]$, which stores $\ell$
$\lceil \lg (k+1) \rceil$-bit numbers.  Once we have this length
$\ell$ array, we scan it from right to left, constructing the total
order of elements in $A[x_1 ..x_2]$ by Lemma~\ref{lem:active}.  As
before, using a linked list to store the total order is sufficient to
achieve the claimed time bound. \qed
\end{proof}

\subsection{Version Control}

One issue that arises is that to answer queries we will need to
construct fragments of the structure $S_k(j)$ for various values of
$j$, which are not necessarily short suffixes.  In particular, given a
block $\mathcal{G}_i$, we wish to be able to reconstruct its
\emph{window fragment}, which is the fragment of the structure
$S_k(g(i) - 1)$ corresponding to the window of block $\mathcal{G}_i$.
Suppose the window is even-block
$\mathcal{B}_w$. Lemma~\ref{lem:subarray-enc} only allows us to
construct the length $B$ suffix of structure $S_k(B(w+1))$, rather
than the window fragment of $\mathcal{G}_i$.  Thus, we are interested
in how much space is required to recover a window fragment given what
we can recover using Lemma~\ref{lem:subarray-enc}.

\begin{lemma}
Suppose block $\mathcal{G}_i$ has window $\mathcal{B}_w$.  The
difference $\texttt{diff}(i)$ between the window fragment of
$\mathcal{G}_i$ and the length $B$ suffix of $S_k(B(w+1))$ can be
stored using $\Theta(k \lg (B+1))$ bits.  Using $\texttt{diff}(i)$, in
addition to the other data structures described thus far, we can
construct the window fragment of $\mathcal{G}_i$ in time $\Oh(B^2)$.
\end{lemma}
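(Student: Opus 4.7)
The plan is to define $\texttt{diff}(i)$ as a compact record of how the entries of $\mathcal{B}_w$ have evolved between time $B(w+1)$ and time $g(i)-1$, and then argue these changes are few enough to fit in $\Oh(k \lg (B+1))$ bits.

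The key structural observation I would exploit is that the update rule of Section~\ref{sec:encoding-topk} always increments the globally smallest active positions. Restricted to $\mathcal{B}_w$, this means every increment landing in $\mathcal{B}_w$ strictly after time $B(w+1)$ must target the value-smallest still-active position of $\mathcal{B}_w$, or a prefix thereof in value order: any smaller active $\mathcal{B}_w$ position would itself rank among the globally smallest active positions and hence be incremented first. Consequently, at time $g(i)-1$ the positions of $\mathcal{B}_w$ partition, in value order, into a prefix that has been driven to inactivity (counter $=k$), a short middle segment of modified-but-active positions, and a suffix of positions whose counters remain as in $S_k(B(w+1))$. The first segment is described by a single cursor index and the third is inherited verbatim from $S_k(B(w+1))$, so only the middle segment requires explicit listing.

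The heart of the proof is bounding the middle segment by $\Oh(k)$. For this I would reuse the split-charging analysis inside the proof of Lemma~\ref{lem:decomposition}: each burst of increments landing in $\mathcal{B}_w$ between time $B(w+1)$ and $g(i)-1$ corresponds either to a block whose window is $\mathcal{B}_w$ or to a singleton split charged to $\mathcal{B}_w$, and each such event can leave behind at most $k-1$ value-largest positions that were active and not driven to inactivity. Composing these events and using that $\mathcal{G}_i$ is itself a non-singleton block with window $\mathcal{B}_w$, I would derive that only $\Oh(k)$ positions of $\mathcal{B}_w$ can be modified-but-active at time $g(i)-1$. Encoding each such position as a pair consisting of a local offset in $[0,B-1]$ and its current counter in $[0,k-1]$ costs $\lceil \lg B \rceil + \lceil \lg(k+1) \rceil$ bits, and an additional $\lceil \lg(B+1) \rceil$ bits record the cursor delimiting the inactive prefix, giving the claimed $\Theta(k \lg(B+1))$-bit bound. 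The main obstacle is making this composition precise, since many blocks may intervene between $B(w+1)$ and $g(i)-1$ and their windows and singleton events interact in subtle ways.

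Given $\texttt{diff}(i)$, reconstruction is routine: invoke Lemma~\ref{lem:subarray-enc} with the appropriate parameters to recover the length-$B$ suffix of $S_k(B(w+1))$ in $\Oh(B^2)$ time, decode $\texttt{diff}(i)$ to obtain the cursor together with the list of modified entries, set the first cursor-many entries (in value order) to $k$, and overwrite the $\Oh(k)$ modified entries with their stored counters. Since $k = \Oh(B)$ in the regime of interest, these modifications take $\Oh(B)$ time and are dominated by the cost of the initial recovery, yielding $\Oh(B^2)$ overall.
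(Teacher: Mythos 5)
Your observation that each window increment hits a value-order prefix of the still-active positions of $\mathcal{B}_w$ is correct, but the structural conclusion you draw from it fails, and this is where your space bound breaks. You claim the positions of $\mathcal{B}_w$ decompose (in value order) into an inactive prefix, a middle of $\Oh(k)$ modified-but-active positions, and an unchanged suffix. That middle segment need not be small: a single position $x > B(w+1)$ whose value lands near the top of the active elements of $\mathcal{B}_w$ increments all of the smaller-valued active positions at once, so one such event can leave $\Theta(B)$ positions whose counters changed from $0$ to $1$ and which remain active when $k\ge 2$. (Such an $x$ has weight about $B$ and is therefore a singleton, but a non-singleton block $\mathcal{G}_i$ with window $\mathcal{B}_w$ can immediately follow it, and the window fragment of $\mathcal{G}_i$ is taken at time $g(i)-1$, after $x$ has been processed.) Encoding these positions explicitly as offset/counter pairs would then cost $\Theta(B\lg(Bk))$ bits, not $\Oh(k\lg(B+1))$. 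The split-charging argument in Lemma~\ref{lem:decomposition} bounds the number of split \emph{events} charged to $\mathcal{B}_w$ by $\Oh(k^2)$; it says nothing about how many positions remain modified-but-active afterwards, so composing it does not rescue your $\Oh(k)$ bound.

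The paper avoids enumerating changed positions altogether. For each position $x_\ell$ that triggers window increments in $\mathcal{B}_w$ between times $B(w+1)$ and $g(i)-1$, it records only the rank $y_\ell$ that $A[x_\ell]$ would occupy in the total order $\mathcal{L}$ of the $(B(w+1))$-active elements of $\mathcal{B}_w$ (recoverable in $\Oh(B^2)$ time from Lemmas~\ref{lem:subarray-enc} and~\ref{lem:active}), and then keeps only the $k$ largest of these ranks. This suffices because a position at rank $r$ in $\mathcal{L}$ receives exactly $|\{\ell : y_\ell > r\}|$ new increments, capped at $k$: every position with rank below the $k$-th largest recorded $y_\ell$ receives at least $k$ increments and is inactive regardless of its starting counter, while for every position at or above that threshold the full increment count is determined by the stored top-$k$ ranks together with the length-$B$ suffix of $S_k(B(w+1))$. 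Each rank lies in $[0,B]$, so $k\lceil\lg(B+1)\rceil$ bits suffice. The idea you are missing is that one should not store \emph{which} positions changed but rather the top-$k$ increment ranks, from which all counters can be re-derived.
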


\begin{proof}
Lemmas~\ref{lem:subarray-enc} and~\ref{lem:active} allow us to recover
the total order $\mathcal{L}$ of the $(B(w+1))$-active elements in the
window fragment in $\Oh(B^2)$ time.  Consider the sequence of
positions in the array $A$, $\{ x_1, ..., x_z \} $ that have window
increments associated with them occurring within the window fragment,
where $x_1 > B(w+1)$ and $x_z < g(i)$.  Each element $A[x_\ell]$ can
be mapped to a position $y_\ell$ in the total order $\mathcal{L}$.  It
is sufficient to record the $k$ largest values in this mapping, as all
$B(w+1)$-active positions represented in $\mathcal{L}$ which are
smaller than the $k$-th largest such value will become
$(g(i)-1)$-inactive.  Storing how these $k$ values interleave with the
ordering $\mathcal{L}$ requires at most $k \lceil \lg (B +1) \rceil$
bits of space.  Note that we do not need to know the positions where
these elements occur in $A$ in order to reconstruct the window
fragment, just their positions in the total ordering $\mathcal{L}$,
which contains at most $B$ elements. \qed
\end{proof}

We store $\texttt{diff}(i)$ for each $i \in [1,h]$ (recall $h$ is the
number of blocks).  This requires $\Oh(hk\lg B) =
\Oh(\frac{nk^3\lg(B+1)}{B})$ bits of space in total.

\subsection{Decomposing Queries}

Any range top-$k$ query is either fully within a single block, or
consists of three parts: a suffix of a block $\mathcal{G}_{i}$ that we
call the \emph{left part}, then a number of full blocks
$\mathcal{G}_{i+1},\ldots,\mathcal{G}_{j-1}$ that we call the
\emph{middle part}, and finally a prefix of a block $\mathcal{G}_{j}$
that we call the \emph{right part}.  Note that any of these three
parts may be an empty range. Using the block index we can determine
these parts in $\Oh(1)$ time.

We construct a new array $A'$ by keeping the $k$ largest elements from
every block (if a block is a singleton, this is just one element) and
normalizing all the elements by sorting. $A'$ is stored explicitly and
augmented with a range maximum query structure, which allows us to
locate the $k$ largest element in any query range via a three-sided
range reporting query: this can be done in $\Oh(k)$ time and
$\Oh(\frac{nk^3\lg n}{B})$ bits of space, using successive queries to a range
maximum structure built over $A'$ since we have access to these
elements.

Additionally, for every $j$ such that $A[j]$ appears in $A'$, i.e., is
one of the $k$ largest elements in its block, we store the positions
of the first $k$ larger elements on its left in $A$. This requires
space $\Oh(\frac{nk^3 \lg n }{ B})$ bits.

\subsection{Wrap Up}

Now that we have described all of the data structures, we can explain
how to extract the positions of the top-$k$ elements, given a query
range $A[i..j]$.  The algorithm will consist of first finding the
positions of the top-$k$ elements in the middle part, and the total
ordering of elements in the left and right parts.  Extracting the
solution from the middle part is trivial, since we have a top-$k$ data
structure explicitly stored on the top-$k$ elements in each block.
Extracting the total ordering of elements from the left (or right)
part can be done by applying Lemma~\ref{lem:subarray-query} to the
even-block containing the left or right part.

At this point, we have at most three lists $L_1$,$L_2$, and $L_3$,
storing positions of the top elements from the left, middle, and right
parts respectively, i.e., $L_p[q]$ is the position of the $q$-th
largest element in list $p$. We now argue that we can merge these
lists.

\begin{lemma}
Suppose we are given a query $A[i..j]$.  A list $L$ can be constructed
such that $L[q]$ is the position of the $q$-th largest element, for $1
\le q \le k$, in $A[i..j]$ in time $\Oh(k + B^2)$.
\end{lemma}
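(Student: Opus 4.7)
My plan is to assemble $L$ by performing a standard three-way merge of the sorted lists $L_{1},L_{2},L_{3}$, each of length at most $k$. Such a merge performs $\Oh(k)$ pairwise comparisons, so if every comparison can be decided in $\Oh(1)$ time using the structures already in place, the merging cost is $\Oh(k)$; together with the $\Oh(B^{2})$ already charged to Lemma~\ref{lem:subarray-query} for constructing $L_{1}$ and $L_{3}$ the total becomes $\Oh(B^{2}+k)$.

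I first observe that $L_{1}\cup L_{2}\cup L_{3}$ contains the top-$k$ positions of $A[i..j]$: every index in $A[i..j]$ lies in exactly one of the three parts, and if it is among the global top-$k$ it is \emph{a fortiori} among the top-$k$ of its own part. Hence it suffices to run a merge on these three already-sorted lists and truncate at length $k$.

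The comparisons needed during the merge fall into three groups. Two elements of the same $L_{r}$ are already ordered by construction. Two elements of $L_{2}$ are compared directly via their normalized values stored in the explicit short array $A'$. The nontrivial case is a cross comparison of $p\in L_{1}$ (symmetrically $p\in L_{3}$) with some $q\in L_{2}$, which I propose to resolve using the auxiliary table storing, for every position whose value appears in $A'$, the positions of the first $k$ larger elements on the relevant side in $A$. Testing membership of $p$ in the length-$k$ list associated with $q$ takes $\Oh(1)$ time and, combined with the position of the farthest recorded entry, is meant to certify whether $A[p]>A[q]$. Cross comparisons between $p\in L_{1}$ and $r\in L_{3}$ are reduced to two such tests by pivoting through any element of $L_{2}$; in the degenerate case where $L_{2}$ is empty, the middle spans no blocks, so the left and right parts lie inside two adjacent blocks and can be treated by a single application of Lemma~\ref{lem:subarray-query} to their union.

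The main obstacle is making the cross comparison of type $L_{1}$ versus $L_{2}$ correct in $\Oh(1)$ when the stored dominator list of $q$ is saturated and $p$ is farther from $q$ than the farthest recorded entry; here the list alone does not determine whether $A[p]>A[q]$. The plan is to argue structurally that in this saturated regime the $k$ recorded positions all lie within the query range $[i,j]$ and witness enough dominators of $q$ that either the comparison is disambiguated directly from the positions, or $q$ is already excluded from the global top-$k$ by those $k$ witnesses and can be skipped without affecting $L$. Once this dichotomy is in place, the three-way merge performs $\Oh(k)$ comparisons, each in $\Oh(1)$ time, yielding the claimed $\Oh(k+B^{2})$ bound.
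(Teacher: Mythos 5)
The part of your argument that covers the merge of $L_1$ with $L_2$ is broadly in the spirit of what the paper does (the paper determines $L_2[p]$'s rank directly by counting how many of its left pointers land in the left part and checking $p+k'\le k$, rather than running a pairwise merge, but the underlying use of the stored left-pointer lists and the saturated-list dichotomy is the same idea). The serious problems are in the merge with $L_3$, where your approach diverges from the paper and, I believe, breaks.

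First, the auxiliary table is one-sided: the paper stores, for each position appearing in $A'$, only ``the positions of the first $k$ larger elements on its \emph{left} in $A$.'' There is no right-pointer analogue, so the phrase ``on the relevant side'' assumes a structure that does not exist, and comparing $q\in L_2$ against $r\in L_3$ (where $r$ lies to the right of $q$) cannot be done by consulting $q$'s stored list. One might hope to consult $r$'s left-pointer list instead, but $L_3$ is built by Lemma~\ref{lem:subarray-query} as the full total order of the right fragment, so many of its entries are not among the top-$k$ of their block, do not appear in $A'$, and therefore have no stored left-pointer list at all. Second, the pivot trick does not close the gap for $L_1$ versus $L_3$: a three-way merge must resolve the head-to-head comparison precisely when both $L_1$'s head and $L_3$'s head dominate the current head of $L_2$, and transitivity through a smaller pivot gives no information in that case. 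Third, in the degenerate empty-middle case, the left and right parts live in two distinct blocks that may lie in two distinct even-blocks; Lemma~\ref{lem:subarray-query} (and the underlying Lemma~\ref{lem:subarray-enc}) requires the subarray to sit inside a single even-block, so a ``single application'' to the union is not available in general.

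The paper's actual mechanism for folding in $L_3$ is quite different and is the crux of the lemma: it reconstructs the window fragment of the right block $\mathcal{G}_r$ from $\texttt{diff}(r)$ and Lemma~\ref{lem:subarray-enc}, then scans $\mathcal{G}_r$ up to $j$ while replaying the window increments from $E_{\text{WIN}}$. For each scanned position $j'$, observing whether an $L'$-position is incremented (or was already made inactive) certifies the order $A[L'[p]]$ vs.\ $A[j']$ without any explicit comparison oracle, and positions outside the window region are handled immediately. This replay is what contributes the $\Oh(B^2)$ term, and it is the missing ingredient in your proposal.
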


\begin{proof}
First, we construct the three lists $L_1$, $L_2$ and $L_3$ as
described above in time $\Oh(B^2)$.  Then we merge the lists $L_1$
(the left part) with the list $L_2$ (the middle part).  This is done
by examining the left pointers of each position in $L_2$.  Consider
the subset $\{\Upsilon_1, ..., \Upsilon_{k'}\}$ of positions in the
left part such that $L_2[p]$ has a left pointer to $\Upsilon_{r}$, for
$1 \le r \le k'$.  If $p + k' \le k$, then implies that $L_2[p]$ is
the $(p + k')$-th largest element in the combination of the left and
middle parts.  Otherwise, it implies that $L_2[p]$ is not in the
top-$k$ in the combination of the two parts.  Using this procedure we
merge the lists $L_1$ and $L_2$, calling the result $L'$.

Next we describe how to merge $L'$ and $L_3$ (the right part).  Recall
that the right part is a prefix of some block $\mathcal{G}_{r}$.  We
reconstruct the window fragment of $\mathcal{G}_{r}$ using
$\texttt{diff}(r)$.  We then scan through $\mathcal{G}_{r}$ up to
position $j$, performing window increments on the window fragment by
reading $E_{\text{WIN}}$.  Let $m_\alpha =
\selop_1(E_{\text{WIN}},\alpha)$.  We read the window increments of
$E_{\text{WIN}}$ from the range $[m_{g(r)-1}+1,m_{j}]$.  Since
$\mathcal{G}_{r}$ is not a singleton block (otherwise it would be
fully contained in the middle part), we have that the length of this
range in $E_\text{WIN}$ is bounded by $B$.  We process the window
increments in order to reconstruct the range $\mathcal{B}_w$ spanned
by the window of $\mathcal{G}_r$ in the structure $S_k(j)$.  During
this process, considering a position $j' \in [g(r),j]$, we observe
that if $L'[p] \not \in \mathcal{C}(j')$, then $A[L'[p]] > A[j']$,
unless position $L'[p]$ had been made inactive by a previous window
increment earlier in the process.  If $L'[p]$ is not in
$\mathcal{B}_w$, then we can infer that $A[L'[p]] > A[j']$
immediately.  Thus, it is possible to insert the positions
$g(r),\ldots,j$ into the list $L'$ to construct the final list $L$
containing the top-$k$ positions in $A[i..j]$.  \qed

\end{proof}

From the above lemmas, we immediately get the following theorem:

\begin{theorem}
There is a data structure occupying
$$(k+2)n H\left(\frac{2}{k+2}\right) + nH\left(\frac{k^2}{B}\right) + \Oh\left(\frac{k^3 n \lg
  n}{ B} + \min\left\{\frac{nk \lg\lg(nk) }{\lg (nk)},n\right\}\right)$$ bits of space,
and supports range top-$k$ queries in $\Oh(k + B^2)$ time.
\end{theorem}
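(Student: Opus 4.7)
The plan is essentially to assemble the pieces, since all the real work has already been done in the preceding lemmas. The theorem is a bookkeeping combination: it claims a space bound and a query time that should follow by summing the components already described and by chaining the query subroutines.

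For the query time, I would argue as follows. Given an input range $A[i..j]$, first use the block index (a call to $\rankop$/$\selop$ from Lemma~\ref{lem:rrr}) to decompose the query in $\Oh(1)$ time into the left, middle, and right parts, as described in the ``Decomposing Queries'' subsection. The left and right parts are handled by Lemma~\ref{lem:subarray-query}, each producing an ordered list of top elements in $\Oh(B^2)$ time. The middle part is handled by the naive top-$k$ structure built over the compressed array $A'$, which returns an ordered list of up to $k$ positions in $\Oh(k)$ time via successive range maximum queries. Finally, the last lemma merges these three lists in $\Oh(k+B^2)$ time using the precomputed left pointers and the window-fragment reconstruction based on $\texttt{diff}(r)$ and $E_{\text{WIN}}$. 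Summing gives the claimed $\Oh(k+B^2)$ query time.

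For the space bound, I would simply add up the contributions of the four stored objects already defined. The block index costs $nH(k^2/B) + \Oh(n\lg\lg n / \lg n)$ bits. The encoding index, storing $E_{\text{INT}}$ and $E_{\text{WIN}}$ together (totaling $2n$ one bits and at most $kn$ zero bits) using whichever of Lemma~\ref{lem:rrr} or Lemma~\ref{lem:rrr-id} is smaller, costs $(k+2)nH(2/(k+2)) + \Oh(\min\{nk\lg\lg(nk)/\lg(nk), n\})$ bits. The $\texttt{diff}(i)$ values contribute $\Oh(nk^3\lg(B+1)/B)$ bits, and the explicit array $A'$ together with its range maximum structure and the length-$k$ lists of ``first $k$ larger positions to the left'' contribute $\Oh(nk^3\lg n / B)$ bits. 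All lower-order terms fold into the $\Oh(\cdot)$ expression in the theorem statement, and the $\lg(B+1)$ in the $\texttt{diff}$ cost is absorbed by $\lg n$ since $B\le n$.

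There is no substantive obstacle here: the decomposition theorem (Lemma~\ref{lem:decomposition}) is what made the construction possible, and the technical lemmas on reconstructing suffixes of $S_k(\cdot)$ (Lemma~\ref{lem:subarray-enc}), window fragments (via $\texttt{diff}$), and merging lists are precisely what fit together to yield both bounds. The only thing to be slightly careful about is ensuring the left/right parts are indeed contained in a single even-block so that Lemma~\ref{lem:subarray-query} applies, which holds by the window constraint \ref{en:window-const} of a good decomposition, and verifying that singleton blocks do not cause problems in the merging step (they are subsumed into the middle part and handled directly by $A'$).
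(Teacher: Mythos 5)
Your proposal is correct and matches the paper's own treatment: the paper states the theorem as an immediate consequence of the preceding lemmas, and your accounting of the four space contributions (block index, encoding index, \texttt{diff} table, and $A'$ with its auxiliary structures) together with the $\Oh(1)+\Oh(B^2)+\Oh(k)+\Oh(k+B^2)$ query-time chain is exactly the bookkeeping the paper leaves implicit. One small misattribution: the fact that the left and right parts lie within a single even-block comes from the basic construction of the decomposition (every block is contained in at most one even-block by definition), not from the window constraint \ref{en:window-const}, which instead controls where the out-of-block increments land; this does not affect the argument.
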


By setting $B = k^{3}\lg n \sqrt{f(n)}$, for a strictly increasing
function $f$, we get the following result:

\begin{corollary}
For any strictly increasing function $f$, there is a data structure
occupying $(k+2)n H(\frac{2}{k+2}) + o(n \lg k)$ bits of space, and
supports range top-$k$ queries in $\Oh(k^6 \lg^2 n f(n))$ time.
\end{corollary}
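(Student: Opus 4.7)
The plan is a direct substitution into the theorem preceding the corollary, followed by checking that each residual space term is $o(n\lg k)$ and that the query time collapses to the stated bound. Concretely, I would set $B = k^{3}\lg n \sqrt{f(n)}$ as suggested, and then bound each summand from the theorem in turn.

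First I would handle the query time, which is the easiest step: plugging in $B$ gives $\Oh(k + B^2) = \Oh(k + k^{6}\lg^{2} n \, f(n)) = \Oh(k^{6}\lg^{2} n\, f(n))$, using that $f$ is strictly increasing (and hence unbounded, so the $B^{2}$ term dominates for large $n$). Next I would check the space term $\Oh\bigl(\frac{k^{3} n \lg n}{B}\bigr)$: substituting $B$ makes this $\Oh\bigl(\frac{n}{\sqrt{f(n)}}\bigr) = o(n)$, which is certainly $o(n\lg k)$. The $\Oh\bigl(\min\{\frac{nk\lg\lg(nk)}{\lg(nk)}, n\}\bigr)$ term is $o(n\lg k)$ unconditionally, since $\min$ is bounded by $n$ and the other argument is $o(n\lg k)$ when $k$ is not too small relative to $n$.

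The one term requiring slightly more care is $nH\bigl(\frac{k^{2}}{B}\bigr)$. Using the standard asymptotic $H(x) = x \lg(1/x) + \Theta(x)$ as $x \to 0$, with $x = \frac{k^{2}}{B} = \frac{1}{k\lg n \sqrt{f(n)}}$, I get
\[
  nH\!\left(\tfrac{k^{2}}{B}\right) = \Oh\!\left(\tfrac{n\,\lg(k\lg n\sqrt{f(n)})}{k\lg n\sqrt{f(n)}}\right) = \Oh\!\left(\tfrac{n}{k\sqrt{f(n)}}\right) + \Oh\!\left(\tfrac{n\lg\lg n}{k\lg n\sqrt{f(n)}}\right),
\]
both of which are $o(n)$, hence $o(n\lg k)$. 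Combining these three bounds with the leading term $(k+2)nH(\frac{2}{k+2})$ from the theorem yields the claimed space of $(k+2)nH(\frac{2}{k+2}) + o(n\lg k)$ bits.

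There is no real obstacle in this argument; it is a computation. The only spot that needs any care is the entropy term $nH(k^{2}/B)$, where one must use that $H(x) \sim x\lg(1/x)$ as $x \to 0$ rather than invoking a trivial $H(x) \le 1$ bound, which would be too weak. Everything else is a direct substitution.
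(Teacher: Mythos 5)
Your approach is exactly the paper's: the paper states the corollary with no more justification than ``by setting $B = k^{3}\lg n\sqrt{f(n)}$,'' so your substitution-and-verify route is the intended one, and the query-time bound and the $\Oh(k^3 n\lg n/B)$ calculation are clean.

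Two small points deserve tightening. In the entropy term, writing $\lg(k\lg n\sqrt{f(n)}) = \lg k + \lg\lg n + \tfrac{1}{2}\lg f(n)$, your split
\[
\Oh\!\left(\frac{n\lg(k\lg n\sqrt{f(n)})}{k\lg n\sqrt{f(n)}}\right) = \Oh\!\left(\frac{n}{k\sqrt{f(n)}}\right) + \Oh\!\left(\frac{n\lg\lg n}{k\lg n\sqrt{f(n)}}\right)
\]
implicitly assumes $\lg k + \tfrac{1}{2}\lg f(n) = \Oh(\lg n)$, which need not hold for an arbitrary strictly increasing $f$ (take $f(n) = 2^{2^n}$). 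The cleaner route is to set $g := k\lg n\sqrt{f(n)}$, observe $g\to\infty$ with $n$, and conclude the term is $n\,\lg(g)/g\,(1+o(1)) = o(n)$ with no decomposition needed. Second, your treatment of the $\min$ term is inverted: the first argument $nk\lg\lg(nk)/\lg(nk)$ is $o(n\lg k)$ when $k$ is \emph{small} (e.g.\ $k = \Oh(1)$ gives $\Theta(n\lg\lg n/\lg n) = o(n)$), not when $k$ is ``not too small''; conversely, when $k=\omega(1)$ one should instead use the cap $n = o(n\lg k)$. Each branch of the $\min$ is indeed $o(n\lg k)$, so your conclusion stands, but the justification should be split into these two cases rather than asserted ``unconditionally.''
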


\subsection{Improvement to Space Bound}

Our final theorem argues that we can slightly improve the space bound:

\begin{theorem}
\label{thm:ds-final} 
For any strictly increasing function $f$, there is a data structure
occupying $(k+1.5)n H\left(\frac{1.5}{k+1.5}\right) + o(n \lg k)$ bits
of space, and supports range top-$k$ queries in $\Oh(k^6 \lg^2 n
f(n))$ time.
\end{theorem}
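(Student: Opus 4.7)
The plan is to improve the space by refining how the internal and window increment encodings are stored. The bottleneck in the previous data structure is the $(k+2)nH(\tfrac{2}{k+2})$-bit term for storing $E_\text{INT}$ and $E_\text{WIN}$, which charges $2n$ total per-position delimiter bits in order to accommodate $n_I + n_W \le kn$ increment bits. The target expression $\lg\binom{(k+1.5)n}{1.5n}$ corresponds to cutting the delimiter count from $2n$ down to $1.5n$ while leaving the increment budget $kn$ unchanged, so the whole argument should revolve around removing roughly half of the separators from one of the two bit vectors without losing the ability to reconstruct window increments on a per-position basis.

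First I would keep $E_\text{INT}$ verbatim, since the proof of Lemma~\ref{lem:subarray-enc} critically relies on per-position separators to attribute internal increments when scanning a block in $\Oh(B^2)$ time. Then I would replace $E_\text{WIN}$ by a coarser variant $E'_\text{WIN}$ in which consecutive pairs of positions within each block share a single delimiter, yielding only $\tfrac{n}{2}$ delimiter ones while still housing the same set of window-increment zeros. Encoding $E_\text{INT}$ and $E'_\text{WIN}$ via Lemma~\ref{lem:rrr} and using convexity to bound the two separate binomial coefficients in terms of a joint one with $1.5n$ ones and at most $kn$ zeros gives a combined cost of $\lg\binom{(k+1.5)n}{1.5n} + \Oh(\tfrac{n \lg \lg n}{\lg n})$, matching the claim up to the $o(n\lg k)$ slack; the other summands from the previous theorem (block index, left-pointer arrays, range-maximum structure over $A'$) are already $o(n\lg k)$ under the same parameter choice $B = k^3 \lg n\sqrt{f(n)}$.

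The main obstacle is adapting the query algorithm to the loss of per-position resolution inside a pair. My plan is to augment each $\texttt{diff}(i)$ record with $\Oh(k \lg B)$ additional bits describing, for the (at most one) pair straddling a query right-boundary inside block $\mathcal{G}_i$, how window increments in that pair split between its two positions; this contributes $\Oh(hk\lg B) = \Oh(\tfrac{nk^3\lg B}{B})$ bits overall, safely absorbed by the same choice of $B$ as in the previous corollary. The rest of the pipeline---left/middle/right decomposition, merging $L_1, L_2, L_3$ through the precomputed left pointers, and scanning window increments against the window fragment---works unchanged, so the query time stays at $\Oh(k + B^2) = \Oh(k^6 \lg^2 n\, f(n))$. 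The delicate step will be proving that processing window increments in pairwise batches produces the same window fragment at the query index $j$ as the unbatched version: I expect this to follow from the observation that, except at the unique boundary pair, only the aggregate effect of a pair's increments is needed before moving on, so the batching is harmless interiorly and is exactly compensated by the augmented $\texttt{diff}$ at the boundary.
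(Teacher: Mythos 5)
Your plan correctly targets the $(k+1.5)nH(\frac{1.5}{k+1.5})$ shape by trying to halve the number of delimiter bits in $E_{\text{WIN}}$ while leaving $E_{\text{INT}}$ alone, but the mechanism you propose---merging consecutive positions into pairs that share a single window delimiter---loses information that the query algorithm genuinely needs, and this cannot be patched with the $\Oh(k\lg B)$ bits you add to $\texttt{diff}(i)$.

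The core issue is that the total number of window increments over a pair $(j',j'+1)$ does not determine the resulting window fragment. Concretely, suppose the window fragment has active positions $e_1,e_2,e_3$ with values $1<2<3$ and counters all $k-2$. If $A[j']$ and $A[j'+1]$ both lie strictly between $1$ and $2$, the pair contributes $c=2$ window increments, both landing on $e_1$, leaving counters $(k,k-2,k-2)$. If instead $A[j']\in(2,3)$ and $A[j'+1]<1$, the pair again contributes $c=2$ increments, but the resulting counters are $(k-1,k-1,k-2)$. Since the merge of $L'$ with the right part attributes increments position-by-position to decide which window positions are dominated by which right-part positions, this ambiguity changes the answer, not just at the pair straddling the query boundary but at every interior pair whose two members straddle an active value in the window fragment. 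Your claim that ``the batching is harmless interiorly'' is therefore false, and no fixed $\Oh(k\lg B)$-bit correction per block (as opposed to one per possible query endpoint) can disambiguate the split for all pairs at once.

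The paper takes a different, lossless route: it removes the window delimiter only for positions that provably perform \emph{no} window increments at all. A position $j$ that is not among the top-$k$ of the prefix of its own even-block already has at least $k$ larger elements inside the even-block and to the right of any candidate window position $a$, so $a$ is already inactive when $j$ is processed. Dropping such a delimiter therefore discards nothing. The good/bad distinction then bounds the positions that still need a delimiter, and by running the whole construction on whichever of $A$ or its reversal yields fewer such positions, the paper gets down to roughly $n/2 + o(n)$ window delimiters, hence $1.5n$ ones and $kn$ zeros total, while a position's goodness is itself recoverable from its internal increments at query time.
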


\begin{proof}
We observe that we need not store a $1$ in the bit vector
$E_{\text{WIN}}$ for elements that are not in the top-$k$ of their
prefix of their even block, as such elements perform no window
increments.  Initially, this does not seem to buy us anything, since
every position can be in the top-$k$ of the prefix of its even block,
but in this case we can take the reversal of the array.  We call an
element \emph{bad} if it is in the top-$k$ of the prefix or suffix of
its even block, and \emph{good} otherwise.

To bound the number of bad positions, consider the top-$2k$ elements
in each even-block.  No other elements can be bad, since there is a
subset of size at least $k$ of these top-$2k$ elements on either its
right or left.  Next consider a good element.  It can only contribute
a one bit to $E_{\text{WIN}}$ in $A$ or to the reverse of $A$, but not
both.  Thus, we have $n - \frac{2kn}{B}$ elements contributing $n -
\frac{2kn}{B}$ one bits to the window encodings for either $A$ or its
reverse.  We therefore need only record $\frac{n}{2} - \frac{2kn}{2B}
+ \frac{2kn}{B} = \frac{n}{2} - o(n)$ one bits for the window encoding
bit vector of $A$ or its reverse.  This reduces the leading term of
the space cost to $(k + 1.5)n H(\frac{1.5}{k+1.5})$.  To correct for
the fact that we have removed one bits from $E_{\text{WIN}}$, we must
adjust select operations on this bit vector by explicitly storing, for
each block, how many elements in the block are good.  Then, when we
process the window increments in a block, we can determine whether an
element is good by examining its internal increments.  This adds an
overhead of $\Oh(\frac{n k^2 \lg n }{ B}) = o(n)$ bits of space and
adds an $\Oh(B^2)$ time cost for determining which elements are good
in a block. \qed
\end{proof}

\bibliographystyle{splncs03}
\bibliography{biblio}

\newpage

\appendix

\section{\label{sec:reduction}Lower bound for Unsorted Range Top-\texorpdfstring{$k$}{k}}

Let a \emph{sorted} range top-$k$ query denote the problem of
returning the indices $i_1, \ldots, i_k$ of the $k$ largest values in
a query range $[i,j]$, in ascending order: i.e., $A[i_j]$ is the
$j$-th largest value.  Let an \emph{unsorted} range top-$k$ query
denote the weaker query in which the indices $i_1, \ldots, i_k$ are
returned in an arbitrary order.

\begin{lemma}
If $\mathcal{S}(n,k)$ is the number of bits required to store an
encoding of sorted range top-$k$ queries on an array $A[1..n]$, then
at least $\mathcal{S}(n - k,k)$ bits are required to store an encoding
of unsorted range top-$k$ queries.
\end{lemma}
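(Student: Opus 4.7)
The plan is a reduction: from any unsorted range top-$k$ encoding using $f(n,k)$ bits on arrays of length $n$, build a sorted range top-$k$ encoding using $f(n,k)$ bits on arrays of length $n-k$, yielding $\mathcal{S}(n-k,k)\le f(n,k)$.

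\textbf{Construction.} Given $B[1..n-k]$, form the padded array $A[1..n]$ by copying $B$ into the first $n-k$ entries and placing $k$ distinct \emph{sentinel} values---each strictly larger than every element of $B$---into positions $n-k+1,\ldots,n$. The sorted encoding of $B$ is simply the unsorted encoding of $A$, costing $f(n,k)$ bits.

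\textbf{Decoding a query.} Given a sorted top-$k'$ query on $B[i..j]$ with $k'=\min(k,j-i+1)$ and $j\le n-k$, write $U_r$ for the unsorted top-$k$ answer on $A[i..r]$ and examine $U_j,U_{j+1},\ldots,U_n$. Because $A[i..j]=B[i..j]$, the set $P:=U_j$ is precisely the (unsorted) set of top-$k'$ positions of $B[i..j]$. A standard property of top-$k$ asserts that when the range is extended by one, either $U_r$ is unchanged (if $A[r+1]$ is too small), or $U_r$ grows by $\{r+1\}$ (if $|U_r|<k$), or exactly one element is evicted from $U_r$---necessarily the element of smallest value---and is replaced by $r+1$. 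Since the $k$ sentinels are the globally $k$ largest entries of $A$, $U_n$ equals the set of sentinel positions; hence every element of $P$ must be evicted at some step during the sweep, and by comparing consecutive $U_r$'s we can enumerate these $P$-evictions in the order in which they occur.

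\textbf{Key step.} The main claim is that the $k'$ members of $P$ are evicted in increasing order of their $B$-value, despite interference from positions in $B[j+1..n-k]$ which may enter $U_r$ and later be evicted from it themselves. Whenever a step actually removes an element of $P$, that element must be the smallest-valued element of $P\cap U_r$ at that moment: eviction targets the global minimum of $U_r$, and any non-$P$ occupant with an even smaller value must therefore have been evicted first. A short induction on the number of already-evicted members of $P$ then shows that the $P$-evictions occur in exactly the order $p_{k'},p_{k'-1},\ldots,p_1$ where $B[p_1]>\cdots>B[p_{k'}]$, so recording them yields the sorted top-$k'$ answer and completes the reduction. The only subtle point is the interleaving of $P$-evictions with evictions of transient elements of $B[j+1..n-k]$, but once one restricts attention to evictions that land inside the already-known set $P$, the induction is essentially immediate.
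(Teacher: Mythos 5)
Your proof is correct and follows the same high-level strategy as the paper: pad the array on the right with $k$ sentinels larger than everything else, sweep the right endpoint outward, and exploit the fact that at each step the top-$k$ set either stays the same or swaps in the new right endpoint for its current minimum. In fact, your proof is \emph{more} careful than the paper's own, and it fixes a genuine imprecision there. The paper tracks the first $k$ changes $\kappa_0,\ldots,\kappa_k$ to the answer set and asserts that $\kappa_k\cap\kappa_0=\emptyset$ and that $\kappa_i\setminus\kappa_{i+1}$ is the $(k-i)$-th largest of the original query. Both claims fail when transient positions from the middle of the padded array enter and later leave the window: e.g.\ with $B=[5,3,4]$, $k=2$, query $[1,2]$, padded to $A=[5,3,4,100,101]$, the sweep produces $\kappa_0=\{1,2\}$, $\kappa_1=\{1,3\}$, $\kappa_2=\{1,4\}$, so $\kappa_2\cap\kappa_0\neq\emptyset$ and $\kappa_1\setminus\kappa_2=\{3\}$, a position not even in the query range. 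Your fix---restrict attention to evictions that land inside $P$, and observe that at the moment a $P$-element is evicted it is the global minimum of $U_r$ and hence the minimum of $P\cap U_r$, so $P$-evictions occur in increasing $B$-value order---is exactly what is needed, and the short induction you sketch is sound. Minor nit: when you say ``any non-$P$ occupant with an even smaller value must therefore have been evicted first,'' the cleaner statement is simply that the evicted element is the minimum of $U_r$, hence in particular the minimum of $P\cap U_r$; you do not need to reason about the non-$P$ occupants at all.
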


\begin{proof}
Suppose there exists an encoding for unsorted range top-$k$ queries
that requires strictly less than $\mathcal{S}(n-k,k)$ bits.  We will
show that such an encoding can be used to construct an encoding for
sorted range top-$k$ queries that occupies strictly less than
$\mathcal{S}(n,k)$ bits.  We pad the input array $A[1..n]$ with $k$
additional values $A[n+1], \ldots, A[n+k]$ such that $A[n+i] > A[j]$
for all $i \in [1,k]$ and $j \in [1,n]$.  We now claim that the
unsorted encoding for the padded array can be used to recover
solutions to all sorted range top-k queries on ranges in $[1,n]$.
Given a query range $[i,j]$, we examine the solutions to unsorted
range top-$k$ queries $[i, j], [i,j+1], \ldots , [i,n+k]$.  Let
$\kappa(j')$ denote the set of indices in $[i,j']$, $\kappa_0 =
\kappa(j)$, $\kappa_\ell = \kappa(\ell')$ where $\ell'$ is the minimum
index such that $\kappa(\ell'-1) \neq \kappa_{\ell_0}$, for $\ell \in
[1,k]$.  By the method we use to pad $A$, it implies that $\kappa_{k}
\cap \kappa_0 = \emptyset$, since the solution to query $[i,n+k]$ is
the set of indices in $[n+1,n+k]$.  Thus, the index of the $k-i$-th
largest element in the sorted solution can be extracted by computing
$\kappa_i \setminus \kappa_{i+1}$ for $i \in [0,k-1]$.  This follows
since the smallest element in $\kappa_i$ is removed, and a new
elemented added to create $\kappa_{i+1}$.  Therefore, we have a
contradiction, since any encoding for the sorted variant must occupy
$\mathcal{S}(n,k)$ bits, and we have given an encoding that occupies
strictly less than $\mathcal{S}(n+k-k,k) = S(n,k)$ bits. \qed
\end{proof}

Thus, for $k=o(n)$ the previous lemma, combined with
Theorem~\ref{thm:top-k-lb} (which provides the function
$\mathcal{S}(n,k)$), implies that the space required for the unsorted
encoding on an array of $n$ elements is within additive lower order
terms of the space required for the sorted encoding on $n$ elements.
\end{document}